\let\saved@includegraphics\includegraphics
\renewenvironment*{figure}{\@float{figure}}{\end@float}
\newcommand{\abs}[1]{\left| #1 \right|} 
\newcommand{\trace}{\mathrm{Tr}}
\newcommand{\dket}[1]{\vert {#1} \rangle \! \rangle} 
\newcommand{\dbra}[1]{\langle \! \langle {#1} \vert} 
\newcommand{\tM}{{\tilde M}}
\let\baraccent=\= 
\renewcommand{\=}[1]{\stackrel{#1}{=}} 
\newcommand{\thmref}[1]{\hyperref[#1]{Theorem~\ref{#1}}}
\newcommand{\lemmaref}[1]{\hyperref[#1]{Lemma~\ref{#1}}}
\newcommand{\lemmaoneref}[1]{\hyperref[#1]{Lemma~1}}
\newcommand{\figref}[1]{\hyperref[#1]{Fig.~\ref{#1}}}
\newcommand{\figaref}[1]{\hyperref[#1]{Fig.~\ref{#1}a}}
\newcommand{\figbref}[1]{\hyperref[#1]{Fig.~\ref{#1}b}}
\newcommand{\figcref}[1]{\hyperref[#1]{Fig.~\ref{#1}c}}
\renewcommand{\eqref}[1]{\hyperref[#1]{Eq.~(\ref{#1})}}
\newcommand{\eqsref}[2]{\hyperref[#1]{Eq.~(\ref{#1})-(\ref{#2})}}
\newcommand{\appref}[1]{\hyperref[#1]{Appx.~\ref{#1}}}
\newcommand{\secref}[1]{\hyperref[#1]{Sec.~\ref{#1}}}
\newcommand{\mH}{{\mathcal H}}
\newcommand{\bC}{{\mathbb C}}
\newcommand{\fraks}{{\mathfrak s}}
\renewcommand{\Re}{{\mathrm {Re}}}
\newcommand{\bR}{{\mathbb R}}
\newcommand{\sisi}[1]{{#1}}
\newtheorem{theorem}{Theorem}
\newtheorem{lemma}{Lemma}
\newcommand{\setword}[2]{%
  \phantomparagraph
  #1\def\@currentlabel{\unexpanded{#1}}\label{#2}%
}
\begin{document}


\title{Saturating the quantum Cram\'er-Rao bound using LOCC}

\author{Sisi Zhou}
\affiliation{Departments of Applied Physics and Physics, Yale University, New Haven, Connecticut 06511, USA}
\affiliation{Yale Quantum Institute, Yale University, New Haven, Connecticut 06520, USA}
\affiliation{Pritzker School of Molecular Engineering, The University of Chicago, Illinois 60637, USA}

\author{Chang-Ling Zou}%
\affiliation{Key Laboratory of Quantum Information, CAS, University of Science and Technology of China, Hefei, Anhui 230026, China}

\author{Liang Jiang}
\affiliation{Departments of Applied Physics and Physics, Yale University, New Haven, Connecticut 06511, USA}
\affiliation{Yale Quantum Institute, Yale University, New Haven, Connecticut 06520, USA}
\affiliation{Pritzker School of Molecular Engineering, The University of Chicago, Illinois 60637, USA}

\date{\today}

\begin{abstract}
The quantum Cram\'er-Rao bound (QCRB) provides an ultimate precision limit allowed by quantum mechanics in parameter estimation. Given any quantum state dependent on a single parameter, there is always a positive-operator valued measurement (POVM) saturating the QCRB. However, the QCRB-saturating POVM cannot always be implemented efficiently, especially in multipartite systems. In this paper, we show that the POVM based on local operations and classical communication (LOCC) is QCRB-saturating for arbitrary pure states or rank-two mixed states with varying probability distributions over fixed eigenbasis. Local measurements without classical communication, however, is not QCRB-saturating in general. 
\end{abstract}

\maketitle

\section{Introduction}
\label{sec:intro}

Quantum metrology~\cite{giovannetti2006quantum,giovannetti2011advances,degen2017quantum,braun2018quantum,pezze2018quantum,pirandola2018advances}
is the study of designing high-precision quantum sensors to estimate physical
parameters in quantum systems. 
It focuses on the ultimate precision achievable in parameter estimation, allowed by
the theory of quantum mechanics. It has wide applications ranging
from frequency spectroscopy and clocks~\cite{sanders1995optimal,bollinger1996optimal,huelga1997improvement,leibfried2004toward,giovannetti2004quantum,buvzek1999optimal,valencia2004distant,de2005quantum}
to gravitational-wave detectors and interferometry~\cite{caves1981quantum,yurke19862,berry2000optimal,higgins2007entanglement}.
Lying in the center of quantum metrology is the quantum
Cram\'er-Rao bound (QCRB)~\cite{helstrom1976quantum,helstrom1968minimum,paris2009quantum,braunstein1994statistical},
which provides a lower bound of parameter estimation error:
\begin{equation}
\delta\theta\geq\frac{1}{\sqrt{NJ(\rho_{\theta})}}.
\end{equation}
Here $\theta$ is the parameter to be estimated, e.g. magnetic field
frequency, $\delta\theta$ is the standard deviation of the $\theta$-estimator,
$\rho_{\theta}$ is the density matrix describing the quantum sensor
as a function of $\theta$, and $N$ is the number of repeated experiments.
$J(\rho_{\theta})$ is the so-called quantum Fisher information (QFI)~\cite{helstrom1976quantum,helstrom1968minimum,paris2009quantum,braunstein1994statistical}
quantifying the sensitivity of a quantum sensor. 

QFI can be viewed as the maximum Fisher information (FI) among all
possible POVMs, where FI is the classical version of QFI as a measure
of sensitivity~\cite{kobayashi2011probability,casella2002statistical,lehmann2006theory}.
It is a function of the probability distribution of measurement results.
In a quantum system, the probability distribution is provided by $P_{x}(\theta)=\trace(\rho_{\theta}E_{x})$
with measurement operators $\{E_{x}\}$. 
To saturate the QCRB, one first performs the optimal POVM maximizing
the FI~\cite{braunstein1994statistical}, and then chooses suitable
classical estimators, e.g. the maximum likelihood estimator which asymptotically ($N\gg1$) saturates the QCRB~\cite{lehmann2006theory,casella2002statistical,brody1996geometry,fujiwara2006strong}.
\sisi{The optimal POVM usually depends on the value of the parameter, which is unknown practically. In order to solve this issue, one could use the two-step method by first using $\sqrt{N}$ states to obtain a rough estimation $\tilde\theta \approx \theta$ and then performing the optimal measurement based on $\tilde \theta$ on the remaining $N-\sqrt{N}$ states~\cite{barndorff2000fisher,hayashi2011comparison,yang2019attaining}. The procedure introduces a negligible amount of error asymptotically.}
If the problem was approached using a Bayesian approach, under quite general conditions, the QFI remains a reliable figure of merit~\cite{lehmann2006theory,gill2008conciliation,pezze2014quantum,jarzyna2015true,gorecki2019pi}.
It is known that rank-one projection onto the eigenstates
of the symmetric logarithmic derivative operator (SLD) usually 
saturates the QCRB~\cite{braunstein1994statistical}. However, in general, the eigenstates of SLD could be highly-entangled states over subsystems, and the optimal measurement requires global measurements (GM) (\figaref{fig:measure}) that might be challenging to implement experimentally~\cite{friis2017flexible}. 

\begin{figure}[htbp]
\includegraphics[width=6.5cm]{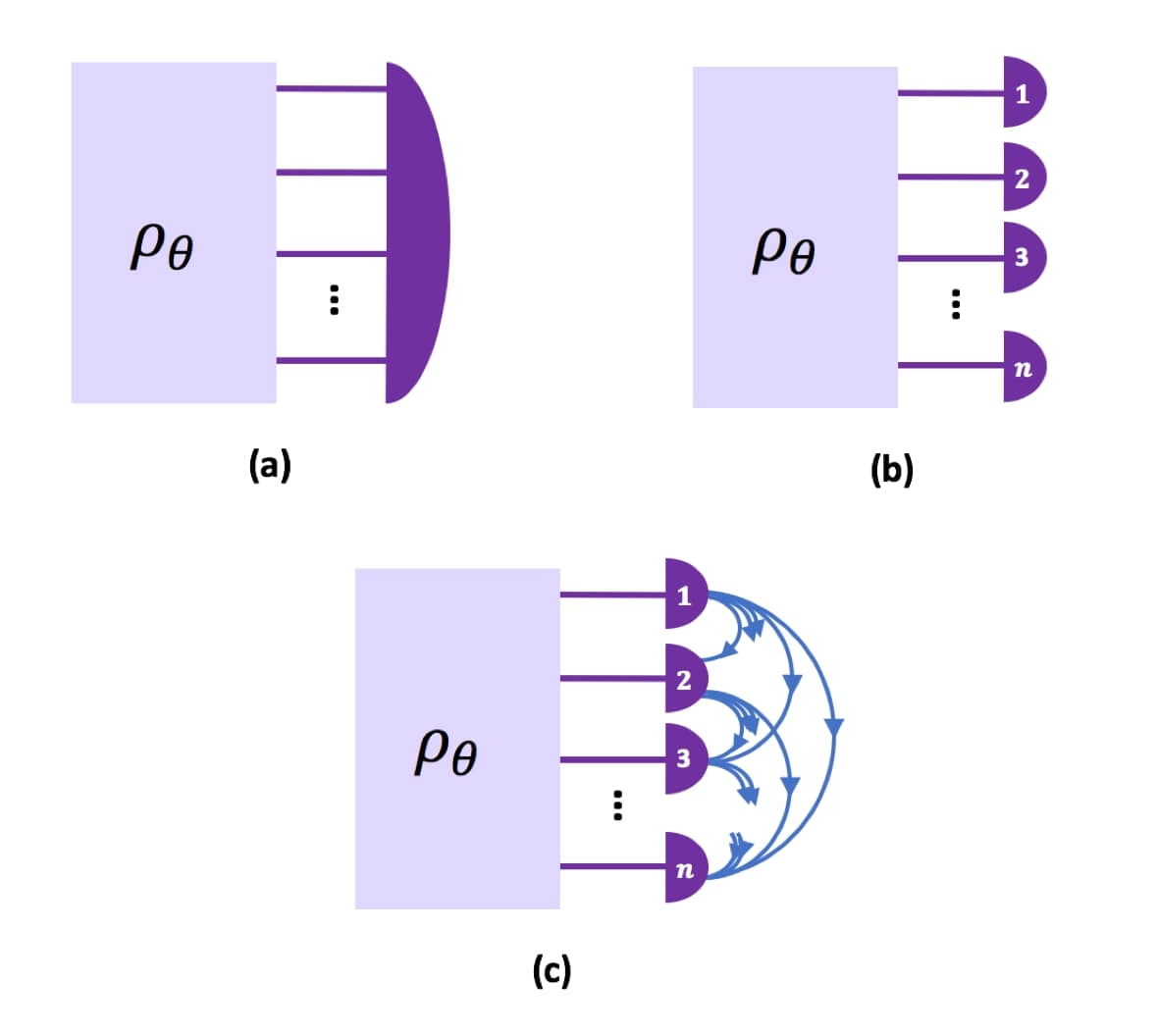}
\caption{\label{fig:measure} Schematics of measurement protocols in quantum metrology. Here, ${\rm LM} \subseteq {\rm LOCC} \subseteq {\rm GM}$. (a) Global measurements (GM). (b) Local measurements (LM). (c) Local operations and classical communication (LOCC). Blue lines represent classical data flows. The state preparation and the probing processes are not shown because they can be as general as possible. 
} 
\end{figure}

Local measurements (LM) (\figbref{fig:measure}), performed separately on each subsystem, were shown to saturate the QCRB in many cases~\cite{giovannetti2006quantum,boixo2008quantum,roy2008exponentially,rams2018limits}.
For example, it is proven in Ref.~\cite{giovannetti2006quantum}
that for GHZ-type states evolving under local Hamiltonians with identical terms,
LM can saturate the QCRB.  
However, by counting the number of degrees of freedom in LM and the QCRB-saturating condition, one can show that LM, in general, is not sufficient to saturate the QCRB in multipartite systems (see \appref{app:no-go-LM} for proof).
Compared to LM, local operations and classical communication (LOCC) (\figcref{fig:measure}) is a larger class of measurements which allows classical communication of measurement results so that the measurement basis performed on one subsystem could
be determined by the measurement results from others~\cite{chitambar2014everything,cirac1999distributed,raussendorf2001one,divincenzo2002quantum,verstraete2003quantum}, which has been demonstrated in many experimental platforms compatible with local
measurement and adaptive control~\cite{Kalb17,Chou18,bernien2017probing,zhang2017observation}. It is a restricted class of quantum operations~\cite{bagan2004collective,bagan2006separable,calsamiglia2010local} that cannot generate entanglement between subsystems. For example, it cannot fully distinguish the four Bell states~\cite{ghosh2001distinguishability}. Nevertheless, LOCC can distinguish any two orthogonal quantum states~\cite{walgate2000local} and, in particular, tell the quantum state itself from the state it evolves into, making it a potential candidate to saturate the QCRB. {
The power of LOCC protocols in achieving optimal performance has also been demonstrated in other contexts~\cite{acin2000optimal,bagan2005purity,duan2007distinguishing,yu2011any,hayashi2006bounds}. }

In this paper, we consider only quantum states $\rho_\theta$ in finite-dimensional Hilbert spaces. We prove that LOCC is QCRB-saturating for two types of quantum states: (i) arbitrary pure states $\rho_{\theta}=\ket{\psi_{\theta}}\bra{\psi_{\theta}}$ and (ii) rank-two mixed states $\rho_{\theta}=p_{\theta}\ket{\psi_{0}}\bra{\psi_{0}}+(1-p_{\theta})\ket{\psi_{1}}\bra{\psi_{1}}$ ($0 < p_\theta < 1$), where $\ket{\psi_{0,1}}$ are fixed basis independent of $\theta$. In the following, we first review the necessary and sufficient condition for QCRB-saturating
measurements in finite-dimensional Hilbert spaces. Then we prove the existence of QCRB-saturating LOCC for states of type (i) and type (ii). Finally, we show that LM is not QCRB-saturating in general. For bipartite pure states, we found an interesting example where there is no QCRB-saturating projective LM but there is a QCRB-saturating LM.

\section{QCRB-saturating POVM}
\label{sec:ns}

To quantify the distinguishability of two neighboring probability distributions, the FI is defined by
\begin{equation}
F(\{P_{  x }(\theta)\}) = \sum_{{  x }} \frac{1}{P_{  x }(\theta)}\Big(\frac{\partial P_{  x }(\theta)}{\partial \theta}\Big)^2,
\end{equation}
where ${  x }$ is the label of measurement results, $P_{  x }(\theta)$ is the probability of obtaining ${  x }$ when the parameter is equal to $\theta$, satisfying $P_{  x }(\theta) \geq 0$ and $\sum_{  x } P_{  x }(\theta) = 1$. For a quantum state $\rho_\theta$, $P_x(\theta)=\trace(\rho_\theta E_x)$ for a POVM described by a set of non-negative operators $\{E_{  x }\}$ satisfying $\sum_{  x } E_{  x } = I$, and the FI
\begin{equation}
\label{eq:inequality}
F(\{P_{  x }(\theta)\}) 
\leq \trace(\rho_\theta L_\theta^2) = J(\rho_\theta).
\end{equation}
Here, $L_\theta$ is the SLD, a Hermitian matrix defined by $\partial_\theta\rho_\theta = \frac{1}{2}(L_\theta \rho_\theta + \rho_\theta L_\theta)$. The FI is equal to the QFI $J(\rho_\theta)$ if and only if,
\begin{equation}
\label{eq:ns-orig}
E_{  x }^{1/2} \rho_\theta^{1/2} = \lambda_{  x } E_{  x }^{1/2} L_\theta \rho_\theta^{1/2},\quad\forall x,
\end{equation}
for some real $\lambda_x$ and for all $x$ such that $\trace(E_x\rho_\theta) = 0$, $\trace(E_x L_\theta \rho_\theta L_\theta) = 0$. We call any POVM $\{E_{x}\}$ satisfying \eqref{eq:ns-orig} \emph{QCRB-saturating}.
Further simplifications of \eqref{eq:ns-orig} leads to~(see \appref{app:ns})

\begin{theorem}[\cite{braunstein1994statistical}]
\label{thm:ns}
$\{E_x\}$ is QCRB-saturating if and only if 
\begin{equation}
\label{eq:ns}
E_x^{1/2} M_{ij} E_x^{1/2} = 0,\quad\forall i,j,x,
\end{equation}
and 
\begin{equation}
\label{eq:reg}
\forall x\text{~s.t.~}\trace(E_x\rho_\theta) = 0,\quad E_x^{1/2} L_\theta \ket{\psi_{\theta,i}} = 0,\;\forall i.
\end{equation}
Here we use the diagonalization of the density matrix $\rho_\theta = \sum_{k} p_{\theta,k} \ket{\psi_{\theta,k}}\bra{\psi_{\theta,k}}$ ($p_{\theta,k} > 0$) and 
\begin{equation}
M_{ij} = \ket{\psi_{\theta,i}}\bra{\psi_{\theta,j}} L_\theta - L_\theta \ket{\psi_{\theta,i}}\bra{\psi_{\theta,j}}. 
\end{equation}
\end{theorem}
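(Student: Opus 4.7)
The plan is to show that $(5)$--$(6)$ is equivalent to $(4)$ together with its proviso by expanding $(4)$ in the spectral decomposition $\rho_\theta^{1/2} = \sum_i \sqrt{p_{\theta,i}}\ket{\psi_{\theta,i}}\bra{\psi_{\theta,i}}$ and analyzing the resulting vector identities. Right-multiplying $(4)$ by $\ket{\psi_{\theta,i}}/\sqrt{p_{\theta,i}}$ gives, for every $i$ in the support,
\begin{equation*}
E_x^{1/2}\ket{\psi_{\theta,i}} = \lambda_x\, E_x^{1/2} L_\theta \ket{\psi_{\theta,i}}.
\end{equation*}

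For the $(\Rightarrow)$ direction, I would post-multiply this identity by $\bra{\psi_{\theta,j}} E_x^{1/2}$ and compare with its Hermitian conjugate (exchanging $i \leftrightarrow j$) pre-multiplied by $E_x^{1/2}\ket{\psi_{\theta,i}}$; the difference is $\lambda_x\, E_x^{1/2} M_{ij} E_x^{1/2} = 0$, which yields $(5)$ whenever $\lambda_x \neq 0$. The case $\lambda_x = 0$, or more generally $P_x = 0$, is covered by the proviso $\trace(E_x L_\theta \rho_\theta L_\theta) = \| E_x^{1/2} L_\theta \rho_\theta^{1/2}\|^2 = 0$, which is exactly $(6)$ (read off in the eigenbasis) and makes $(5)$ trivial since both summands of $M_{ij}$ are then annihilated by $E_x^{1/2}$ on both sides.

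For the $(\Leftarrow)$ direction, write $a_i = E_x^{1/2}\ket{\psi_{\theta,i}}$ and $b_i = E_x^{1/2} L_\theta \ket{\psi_{\theta,i}}$, so $(5)$ becomes the bilinear identity $a_i b_j^\dagger = b_i a_j^\dagger$ for all $i,j$ in the support. Taking $i=j$ with $b_i \neq 0$ and applying the resulting rank-one matrix equation to the vector $b_i$ forces $a_i = c_i b_i$, where $c_i = (a_i^\dagger b_i)/\|b_i\|^2$ is real because $a_i b_i^\dagger = c_i b_i b_i^\dagger$ must be Hermitian. Taking $i \neq j$ with $b_i, b_j \neq 0$ and applying to $b_j$ then forces $c_i = c_j =: \lambda_x$; for $i$ in the support with $b_i = 0$, if any $b_j \neq 0$ exists, then $a_i b_j^\dagger = 0$ forces $a_i = 0 = \lambda_x b_i$. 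If instead $b_j = 0$ for every $j$ in the support, then $E_x^{1/2} L_\theta \rho_\theta^{1/2} = 0$, so the Cauchy-Schwarz step underlying $(3)$ is already saturated term-wise for this $x$ and $(4)$ holds with arbitrary $\lambda_x$; the proviso at $P_x=0$ is read off from $(6)$.

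The main obstacle is the $(\Leftarrow)$ direction: promoting the bilinear constraint $a_i b_j^\dagger = b_i a_j^\dagger$ to a scalar proportionality $a_i = \lambda_x b_i$ with a single real $\lambda_x$ independent of $i$. The rank-one/Hermiticity argument above handles this cleanly whenever at least one $b_j$ is nonzero, but the degenerate case $b_j \equiv 0$ on the support has to be treated separately by appealing directly to the Cauchy-Schwarz inequality whose saturation defines the QCRB.
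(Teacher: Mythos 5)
Your proposal is correct, and its core mechanism is the same as the paper's: eliminate the real constant $\lambda_x$ by forming the anti-Hermitian combination of the two "vectors" being compared, which is exactly how the paper converts $E_x^{1/2}\rho_\theta^{1/2} = \lambda_x E_x^{1/2}L_\theta\rho_\theta^{1/2}$ into \eqref{eq:ns}. The difference is one of implementation: the paper vectorizes the operator identity and invokes the equivalence $\dket{v}=\lambda\dket{w}$ ($\lambda\in\bR$) $\Leftrightarrow \dket{v}\dbra{w}-\dket{w}\dbra{v}=0$ wholesale, then reads off the $M_{ij}$ components from the untouched ancilla factor, whereas you expand in the eigenbasis first and run the rank-one argument pairwise on $a_i b_j^\dagger = b_i a_j^\dagger$. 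Your version is actually more complete on the converse direction: the paper's asserted iff silently fails when $w=0$ but $v\neq 0$ (i.e.\ $E_x^{1/2}L_\theta\rho_\theta^{1/2}=0$ with $\trace(E_x\rho_\theta)\neq 0$), and your explicit reconstruction of a single real $\lambda_x$ from the $c_i$'s, together with the fallback to term-wise Cauchy--Schwarz saturation in that degenerate case, closes a gap the paper's appendix leaves implicit. No errors found; the scope (taking the saturation condition \eqref{eq:ns-orig} as the definition of QCRB-saturating rather than re-deriving it from the Fisher-information inequality chain) matches how the theorem is stated in the main text.
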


The condition \eqref{eq:reg}, though not explicitly spelled out in \cite{braunstein1994statistical}, is necessary in order to deal with measurements satisfying $\trace(\rho_x E_x) = 0$ \cite{zhu2018universally}. 
From \thmref{thm:ns}, it is clear that rank-one projection onto the eigenstates of $L_\theta$ satisfies \eqref{eq:ns}. As an example, we consider sensing with $n$-partite GHZ states
\begin{equation}
\label{eq:GHZ}
\ket{\psi_\theta} = \frac{1}{\sqrt{2}} \left(\ket{0^{\otimes n}} + e^{in\theta}\ket{1^{\otimes n}}\right),
\end{equation}
which can be viewed as the evolution of $\ket{\psi_{\rm in}} = \frac{1}{\sqrt{2}} \left(\ket{0^{\otimes n}} + \ket{1^{\otimes n}}\right)$ under the Hamiltonian $H = \frac{\theta}{2}\sum_{i=1}^n  \sigma^{z,i}$ after unit time, where $\sigma^{z,i}$ is the Pauli-Z matrix acting on the $i$-th qubit. The corresponding SLD is
\begin{equation}
L_\theta = n(ie^{in\theta}\ket{1^{\otimes n}}\bra{0^{\otimes n}}-ie^{-in\theta}\ket{0^{\otimes n}}\bra{1^{\otimes n}}),
\end{equation}
whose eigenstates $\frac{1}{\sqrt{2}}\left(\ket{0^{\otimes n}} \pm ie^{in\theta} \ket{1^{\otimes n}}\right)$ also satisfies \eqref{eq:reg} and therefore induce a QCRB-saturating measurement. 
Saturating the QCRB using projective measurements onto these maximally entangled states requires coupling gates between subsystems and might be challenging for practical experimental implementations. Alternatively, it is well known that projection onto $\ket{\pm} = \frac{1}{\sqrt 2}(\ket{0} \pm \ket{1})$ of individual qubits is also QCRB-saturating~\cite{giovannetti2006quantum}. However, the systematic approach to identify experimental-friendly QCRB-saturating POVM have never been discussed before.

\section{LOCC protocol} 
\label{sec:LOCC}

{
For arbitrary quantum states, LOCC is not suffcient to saturate the QCRB.  
Consider the following two-qubit quantum state
\begin{equation}
\rho_\theta = \theta \rho_1 + (1-\theta) \rho_2,
\end{equation}
where 
\begin{gather}
\rho_1 = \frac{2}{3}\ket{\beta_1}\bra{\beta_1}
+ \frac{1}{3}\ket{\beta_2}\bra{\beta_2} ,\\
\rho_2 =  \frac{1}{3}\ket{\beta_1}\bra{\beta_1} + \frac{2}{3}\ket{\beta_3}\bra{\beta_3},
\end{gather}
$\{\ket{\beta_{i}}\}_{i=1,2,3}$ are three of the Bell states (we don't care about the order of the labels). 
The SLD operator is 
\begin{equation}
L_\theta = 
\frac{1}{1+\theta}\ket{\beta_1}\bra{\beta_1}
+ \frac{1}{\theta}\ket{\beta_2}\bra{\beta_2} 
+ \frac{-1}{(1-\theta)}\ket{\beta_3}\bra{\beta_3},
\end{equation}
whose coefficients of $\ket{\beta_i}\bra{\beta_i}$ are all different. Therefore $\{M_{ij}\}$ in \eqref{eq:ns} contains terms proportional to $\ket{\beta_i}\bra{\beta_j}$ for all $i\neq j$. If there is an LOCC such that \eqref{eq:ns} is satisfied, then
\begin{equation}
E_x^{1/2} \ket{\beta_i}\bra{\beta_j} E_x^{1/2} = 0,\quad\forall i,j,x,
\end{equation} 
contradicting the fact that any three Bell states cannot be distinguished from each other using LOCC~\cite{ghosh2001distinguishability}. Therefore, LOCC cannot saturate the QCRB for $\rho_\theta$. 
}

Now we consider LOCC as potential candidates to saturate the QCRB for the following two types of quantum states: (i) arbitrary pure states $\rho_\theta = \ket{\psi_\theta}\bra{\psi_\theta}$, which is one of the most commonly used states in quantum metrology \cite{giovannetti2006quantum,giovannetti2011advances}; and (ii) rank-two mixed states $\rho_\theta = p_{\theta}\ket{\psi_0}\bra{\psi_0} + (1-p_{\theta}) \ket{\psi_1}\bra{\psi_1}$, where $\ket{\psi_{0,1}}$ are independent of $\theta$, which might find applications in quantum thermometry~\cite{de2016local,sone2018quantifying,sone2018nonclassical}. These states only have one distinct $M$ in \eqref{eq:ns}. For type (i) states, \eqref{eq:ns} and \eqref{eq:reg} becomes
\begin{equation}
\label{eq:main-ns}
E_x^{1/2} M E_x^{1/2} = 0,\quad\forall x,
\end{equation}
and 
\begin{equation}
\label{eq:main-reg}
\forall x \text{~s.t.~} E_x^{1/2}\ket{\psi_\theta} = 0,\quad E_x^{1/2}\ket{\psi_\theta^\perp} = 0, 
\end{equation}
where $\ket{\psi_\theta^\perp} := (1-\ket{\psi_\theta}\bra{\psi_\theta})\ket{\partial_\theta \psi}$ and 
\begin{equation}
M
= \ket{\psi_\theta}\bra{\psi_\theta^\perp} - \ket{\psi_\theta^\perp}\bra{\psi_\theta}
\end{equation} is a traceless anti-Hermitian matrix. For type (ii) states, we have $M = \ket{\psi_0}\bra{\psi_1}$ and \eqref{eq:reg} is always satisfied. 

In particular, for rank-one projective measurements where $E_x = \ket{E_x}\bra{E_x}$, \eqref{eq:ns} becomes $\braket{E_x|M|E_x} = 0$, $\forall x$. Let us define 
\begin{equation}
\tilde{M} = 
\begin{cases}
M + i\left(\ket{\psi_\theta}\bra{\psi_\theta} - \frac{I}{D}\right)&  \text{for type (i)},\\
M & \text{for type (ii)},\\
\end{cases}
\end{equation}
where $D$ is the dimension of the entire Hilbert space. For type (i) states, $\braket{E_x|\tilde M|E_x} = 0$ implies $\braket{E_x|M|E_x} = 0$ and $\abs{\braket{E_x|\psi_\theta}}^2 = 1/D$ for all $x$. \eqref{eq:main-ns} and \eqref{eq:main-reg} are satisfied. Thus $\braket{E_x|\tilde M|E_x} = 0$, $\forall x$ is a sufficient condition for $\{\ket{E_x}\}$ to be QCRB-saturating. By constructing a LOCC measurement basis satisfying $\braket{E_x|\tilde M|E_x} = 0$, $\forall x$, we prove the following theorem:

\begin{theorem}
\label{thm:locc}
For any multipartite state belonging to type (i) and (ii), there exists a QCRB-saturating LOCC measurement protocol. 
\end{theorem}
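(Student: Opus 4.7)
The plan is to handle types (i) and (ii) separately: in both cases QCRB-saturation reduces to constructing an adaptive LOCC product basis $\{\ket{E_x}\}$ with $\braket{E_x|\tilde M|E_x}=0$, where $\tilde M$ (as defined in the excerpt) is traceless in both cases. For type (ii), the operator $M=\ket{\psi_0}\bra{\psi_1}$ is rank one, so $\braket{E_x|M|E_x}=\braket{E_x|\psi_0}\braket{\psi_1|E_x}=0$ merely demands that every $\ket{E_x}$ be orthogonal to at least one of $\ket{\psi_0}$ or $\ket{\psi_1}$; such a projective measurement is exactly a perfect discrimination of the two orthogonal multipartite states $\ket{\psi_{0,1}}$, whose LOCC realization is guaranteed by the theorem of Walgate \emph{et al.}~\cite{walgate2000local}. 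The regularity condition \eqref{eq:reg} is vacuous here because $M_{00}=M_{11}=0$, so the type-(ii) case is essentially a direct invocation of a known result.

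For type (i) I would proceed by induction on the number $n$ of parties, proving the more general statement that for any traceless anti-Hermitian operator $\tilde M$ on an $n$-partite Hilbert space there exists an adaptive LOCC product basis in which $\braket{E_x|\tilde M|E_x}=0$ for every $x$. The base case $n=1$ follows from the standard linear-algebra fact that any traceless Hermitian (equivalently anti-Hermitian) matrix admits an orthonormal basis with vanishing diagonal. For the inductive step, $\trace_{2\ldots n}(\tilde M)$ is a traceless anti-Hermitian operator on party 1, so one can pick a basis $\{\ket{a_{x_1}}\}$ of party 1 satisfying $\braket{a_{x_1}|\trace_{2\ldots n}(\tilde M)|a_{x_1}}=0$ for each outcome $x_1$; the conditional operator $\tilde M_{x_1}:=\bra{a_{x_1}}\tilde M\ket{a_{x_1}}$ on the remaining parties is then again traceless and anti-Hermitian, so the inductive hypothesis (with the outcome $x_1$ broadcast classically to the remaining parties) supplies an LOCC basis on parties $2,\ldots,n$ that kills its diagonal. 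Concatenating these adaptive choices yields the full $n$-partite LOCC protocol.

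The main obstacle will be tying the single scalar constraint $\braket{E_x|\tilde M|E_x}=0$ to both \eqref{eq:main-ns} and the regularity condition \eqref{eq:main-reg} simultaneously in the type-(i) case. The packaging $\tilde M=M+i(\ket{\psi_\theta}\bra{\psi_\theta}-I/D)$ is tailored so that the zero-diagonal basis delivered by the induction also enjoys the uniform-overlap property $|\braket{E_x|\psi_\theta}|^2=1/D>0$, which both forces $\braket{E_x|M|E_x}=0$ and precludes the singular case guarded by \eqref{eq:main-reg}. The technical heart of the argument is therefore to arrange the local basis choices at each level of the recursion so that this uniform overlap is actually inherited by every resulting $\ket{E_x}$, rather than merely zeroing the diagonal of an abstract traceless anti-Hermitian operator; this is where care beyond the bare induction sketched above will be needed.
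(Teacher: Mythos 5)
Your overall architecture matches the paper's: a one-way adaptive LOCC built by recursively zero-diagonalizing conditional operators on each subsystem, with tracelessness of $\bra{a_{x_1}}\tM\ket{a_{x_1}}$ at the next level guaranteed by the previous level's basis choice, and your type-(ii) reduction to perfect LOCC discrimination of two orthogonal multipartite states \`a la Walgate \emph{et al.} is sound (the paper makes the same identification). However, for type (i) there is a genuine gap exactly where you write that ``care will be needed,'' and the single-matrix fact you invoke as your base case does not close it. Both $\braket{E_x|M|E_x} = \braket{E_x|\psi_\theta}\braket{\psi_\theta^\perp|E_x} - \braket{E_x|\psi_\theta^\perp}\braket{\psi_\theta|E_x}$ and $i\left(\abs{\braket{E_x|\psi_\theta}}^2 - 1/D\right)$ are purely imaginary, so $\braket{E_x|\tM|E_x}=0$ is a \emph{single real equation} which the two terms can satisfy by cancelling against each other rather than vanishing separately; already for a qubit one can exhibit an orthonormal basis that zero-diagonalizes $\tM$ while violating both $\braket{E_x|M|E_x}=0$ and $\abs{\braket{E_x|\psi_\theta}}^2=1/D$. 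Hence zero-diagonalizing one traceless (anti-)Hermitian matrix per subsystem does not deliver \eqref{eq:main-ns} and \eqref{eq:main-reg}.

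What is actually required --- and what the paper's \lemmaref{thm:zero-diag} proves --- is that any \emph{two} traceless Hermitian matrices (here $-iM$ and $\ket{\psi_\theta}\bra{\psi_\theta}-I/D$, together with their conditional descendants at every level of the recursion) can be \emph{simultaneously} zero-diagonalized in a common orthonormal basis. This is strictly stronger than the one-matrix statement: the paper proves it by induction on dimension, first producing a single unit vector annihilating both quadratic forms via a block decomposition of the first matrix into positive and negative parts and a rescaling of the second, then recursing on the orthogonal complement. Once you have this two-matrix lemma, your induction over parties goes through essentially verbatim --- each conditional pair stays traceless Hermitian because the previous level zero-diagonalized both partial traces --- and the uniform overlap $\abs{\braket{E_x|\psi_\theta}}^2 = 1/D > 0$ simultaneously yields \eqref{eq:main-ns} and renders \eqref{eq:main-reg} vacuous. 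So your skeleton is right, but the simultaneous two-matrix zero-diagonalization is the technical heart of the proof, not a detail to be deferred; without it the type-(i) argument is incomplete. (A minor aside: for type (ii), \eqref{eq:reg} holds because $L_\theta$ is supported on $\mathrm{span}\{\ket{\psi_0},\ket{\psi_1}\}$, not because $M_{00}=M_{11}=0$, though your conclusion there is correct.)
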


In fact, the QCRB is saturable for arbitrary $n$ when one-way classical communication is allowed, where the measurement result of subsystem ${\mathfrak s}_k$ is classically communicated to ${\mathfrak s}_{k+1},\ldots,{\mathfrak s}_{n}$ to assist the choice of their measurement basis. The corresponding POVM (\figcref{fig:measure}) is
\begin{equation}
\label{eq:LOCC}
E_{x_1,\ldots,x_n} = E_{x_1}^{{\mathfrak s}_1} \otimes E_{x_1,x_2}^{{\mathfrak s}_2} \otimes \cdots \otimes E_{x_1,\ldots,x_n}^{{\mathfrak s}_n},
\end{equation}
where $E^{{\mathfrak s}_k}_{x_1,\cdots,x_k}$ are non-negative operators in subsystem ${\mathfrak s}_k$ satisfying $\sum_{x_k} E_{x_1,\cdots,x_k}^{{\mathfrak s}_k} = I^{{\mathfrak s}_k}$. 

The procedure to construct a QCRB-saturating rank-one projective LOCC, where $E_{x_1,\ldots,x_k}^{{\mathfrak s}_k} = \ket{E_{x_1,\ldots,x_k}^{{\mathfrak s}_k}}\bra{E_{x_1,\ldots,x_k}^{{\mathfrak s}_k}}$, with the structure in \eqref{eq:LOCC} can be summarized as follows: 
\begin{enumerate}[(1),wide,labelwidth=!,labelindent=0pt]
\item Calculate $\tM^{{\mathfrak s}_1} = \trace_{{\mathfrak s}_2\cdots {\mathfrak s}_n}(\tM)$ by tracing out subsystems $\{{\mathfrak s}_2,\ldots,{\mathfrak s}_n\}$ in matrix $\tM$; 
\item Find an orthonormal basis $\ket{E^{{\mathfrak s}_1}_{x_1}}$ in ${\mathfrak s}_1$ such that $\braket{E^{{\mathfrak s}_1}_{x_1}|M^{{\mathfrak s}_1}|E^{{\mathfrak s}_1}_{x_1}} = 0$; 
\item Calculate $\tM^{{\mathfrak s}_2}_{x_1} = \trace_{{\mathfrak s}_3\cdots {\mathfrak s}_n}(\braket{E^{{\mathfrak s}_1}_{x_1}|\tM|E^{{\mathfrak s}_1}_{x_1}})$; 
\item Find an orthonormal basis $\ket{E^{{\mathfrak s}_2}_{x_1,x_2}}$ in ${\mathfrak s}_2$ such that $\braket{E^{{\mathfrak s}_2}_{x_1,x_2}|\tM_{x_1}^{{\mathfrak s}_2}|E^{{\mathfrak s}_2}_{x_1,x_2}} = 0$; 
\item Repeat steps (3)-(4) for subsystems ${\mathfrak s}_3$,\ldots,${\mathfrak s}_n$. 
\end{enumerate}

In steps (2) and (4), we use the following lemma:

\begin{lemma}
\label{thm:zero-diag}
Given any traceless matrix $\tilde{M} \in \bC^{d\times d}$, there exists an complete orthonormal basis $\{\ket{u_i}\}_{i=1}^d$ in $\bC^{d}$ such that $\bra{u_i}\tilde{M}\ket{u_i} = 0$ for all $i$. 
\end{lemma}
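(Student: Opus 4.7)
The plan is to induct on the dimension $d$. The base case $d=1$ is vacuous: a $1 \times 1$ traceless matrix is zero, so any unit vector qualifies. For the inductive step, I will first produce a single unit vector $\ket{u_1}$ satisfying $\braket{u_1|\tilde M|u_1} = 0$, then restrict attention to the $(d-1)$-dimensional subspace $\ket{u_1}^\perp$ and apply the inductive hypothesis to an appropriate compression of $\tilde M$ there.

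The heart of the argument is showing that such a $\ket{u_1}$ always exists. My approach is to invoke the Hausdorff--Toeplitz theorem, which asserts that the numerical range $W(\tilde M) := \{\braket{v|\tilde M|v} : \norm{v} = 1\}$ is a convex subset of $\bC$. Every eigenvalue of $\tilde M$ lies in $W(\tilde M)$ (achieved by a normalized eigenvector), so by convexity the entire convex hull of the eigenvalues is contained in $W(\tilde M)$. Since $\tilde M$ is traceless, the average of its eigenvalues (with algebraic multiplicities) equals $\frac{1}{d}\trace(\tilde M) = 0$, and this average lies in the convex hull. Hence $0 \in W(\tilde M)$, and a witnessing $\ket{u_1}$ exists.

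To close the induction, I would verify that the compression $\tilde M' := P \tilde M P$ with $P = I - \ket{u_1}\bra{u_1}$, viewed as an operator on the $(d-1)$-dimensional Hilbert space $\ket{u_1}^\perp$, remains traceless: $\trace(\tilde M') = \trace(P \tilde M) = \trace(\tilde M) - \braket{u_1|\tilde M|u_1} = 0$. The inductive hypothesis then yields an orthonormal basis $\{\ket{u_2},\ldots,\ket{u_d}\}$ of $\ket{u_1}^\perp$ with $\braket{u_i|\tilde M'|u_i} = 0$; since $P \ket{u_i} = \ket{u_i}$ for $i \geq 2$, these inner products coincide with $\braket{u_i|\tilde M|u_i}$, and adjoining $\ket{u_1}$ gives the desired complete orthonormal basis.

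The only nontrivial ingredient is the convexity of the numerical range, which is classical but not entirely elementary; this is where I expect the real mathematical content to lie. Once granted, the reduction to a smaller traceless problem is a routine trace calculation, and the induction runs cleanly. An alternative would be to prove existence of $\ket{u_1}$ directly via a continuity/connectedness argument on the unit sphere, writing $\tilde M = A + iB$ with $A,B$ traceless Hermitian and using the intermediate value theorem along suitable paths, but this essentially re-derives Hausdorff--Toeplitz in disguise.
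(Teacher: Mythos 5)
Your proof is correct, and it takes a genuinely different route to the one essential step. Both you and the paper share the same skeleton: find one unit vector on which $\tilde M$ has vanishing expectation, compress to its orthogonal complement (where tracelessness is preserved by the same one-line trace computation), and induct. The difference is how that first vector is produced. You get it by quoting the Hausdorff--Toeplitz theorem: the numerical range is convex, contains the spectrum, hence contains the spectral average $\trace(\tilde M)/d = 0$. The paper instead writes $\tilde M = M_1 + iM_2$ with $M_1, M_2$ traceless Hermitian and constructs the vector by hand --- an explicit trigonometric solution in the $2\times 2$ case, and in higher dimensions a block decomposition of $M_1$ into positive and negative parts, a rescaling of $M_2$, and an interpolation $\cos\beta\,\ket{v_1\oplus 0} + \sin\beta\, e^{-i\alpha}\ket{0\oplus v_2}$ whose parameters $(\alpha,\beta)$ are solved for directly. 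Your argument is shorter and cleaner, at the cost of importing a classical but nontrivial external theorem (whose standard proof itself reduces to a $2\times 2$ computation much like the paper's). The paper's hands-on version buys something it explicitly needs downstream: the proof is algorithmic, and the QCRB-saturating LOCC basis is computed by running this construction on $\tilde M$ (as in the four-qubit example of \appref{app:example}). If you wanted your version to serve the same purpose, you would need to supplement Hausdorff--Toeplitz with an effective procedure for locating a unit vector witnessing $0 \in W(\tilde M)$, which essentially lands you back at the real-and-imaginary-part construction you mention as an alternative at the end.
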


A constructive proof can be found in \appref{app:LOCC}. Our construction is mathematically reminiscent of the one provided in Ref.~\cite{walgate2000local} where LOCC is used to distinguish two multipartite orthogonal quantum states, but our construction does not require extending the dimension of each subsystem to be a power of two. In fact, parameter estimation is closely related to state discrimination. Projective measurements $\{\ket{E_{  x }}\bra{E_{  x }}\}$ distinguishs two orthogonal quantum states $\ket{\psi_{0,1}}$ as long as $\braket{E_{  x }|\psi_0}\braket{\psi_1|E_{  x }} = 0$~\appref{app:dist}. 
It is then clear that a measurement distinguishing an orthonormal basis $\{\ket{\psi_{k}}\}$ is also QCRB-saturating when estimating $\theta$ in the probability coefficients for any mixed quantum states $\sum_{k} p_{\theta,k} \ket{\psi_{k}}\bra{\psi_{k}}$ ($p_{\theta,k} > 0$). 

In \appref{app:example}, we provide an example of a four-qubit system with a nearest neighbour interaction Hamiltonian, where the parameter to estimated is the strength of the Hamiltonian. We use the algorithm described above to calculate the LOCC measurement basis and plot them in the Bloch spheres.

{
\section{Local measurements}
\label{sec:LM}

LM in a $n$-partite system $\{{\mathfrak s}_1,\ldots,{\mathfrak s}_n\}$ (\figbref{fig:measure}) has the following structure
\begin{equation}
\label{eq:LM}
E_{x_1,\ldots,x_n} = E_{x_1}^{{\mathfrak s}_1} \otimes E_{x_2}^{{\mathfrak s}_2} \otimes \cdots \otimes E_{x_n}^{{\mathfrak s}_n},
\end{equation}
where $x_k$ is the $k$-th measurement result and $\{E_{x_k}^{{\mathfrak s}_k}\}$ is a POVM in subsystem ${\mathfrak s}_k$. 
One may wonder whether LM would be sufficient to saturate the QCRB, as for GHZ states. It is not possible in general for sufficiently large $n$, because the number of the degrees of freedom in LM grows linearly as the number of qubits increases but that in the quantum states grows exponentially (see the detailed proof in \appref{app:no-go-LM}). 

For bipartite pure states, however, the argument above does not hold and the problem should be treated carefully. Consider $\ket{\psi_\theta} \in \mH_1 \otimes \mH_2$ where $\dim \mH_1 = d_1$ and $\dim \mH_2 = d_2$. Let 
\begin{equation}
\ket{\psi_\theta} = \sum_{ij} A_{ij}\ket{i}\ket{j},\quad \ket{\psi_\theta^\perp} = \sum_{ij} B_{ij}\ket{i}\ket{j}. 
\end{equation}
The orthogonality condition $\braket{\psi_\theta|\psi_\theta^\perp} = 0$ implies $\trace(A^\dagger B) = 0$. 
Then we have the following lemma: 
\begin{lemma}
\label{thm:bi}
There exists a QCRB-saturating LM for $\ket{\psi_\theta}$ if and only if there are isometries $U$ and $V$ satisfying $U U^\dagger = I \in \bC^{d_1 \times d_1}$ and $V V^\dagger = I \in \bC^{d_2 \times d_2}$ such that $C = U^\dagger AV$, $D = U^\dagger BV$ satisfying ($\overline{\,\cdot\,}$ means complex conjugate)
\begin{equation}
\label{eq:bipartite-1}
C_{ij}\overline{D}_{ij} = \overline{C}_{ij} D_{ij},\quad\forall i,j, 
\end{equation}
and 
\begin{equation}
\label{eq:bipartite-2}
\forall i,j,\quad \text{if~~}C_{ij} = 0,\quad D_{ij} = 0. 
\end{equation}
\end{lemma}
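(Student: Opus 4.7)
The plan is to specialize the general QCRB-saturation criteria \eqref{eq:main-ns}-\eqref{eq:main-reg} to product POVMs of the LM form \eqref{eq:LM} and to show that, after reducing to rank-one local effects, the resulting algebraic conditions are exactly \eqref{eq:bipartite-1}-\eqref{eq:bipartite-2} with $U,V$ encoding the local POVMs.

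For the necessity direction, given any QCRB-saturating LM $\{E_{x_1}^{{\mathfrak s}_1}\otimes E_{x_2}^{{\mathfrak s}_2}\}$, I would diagonalize each local effect as $E_{x_k}^{{\mathfrak s}_k}=\sum_\alpha\lambda_{x_k,\alpha}\ket{e_{x_k,\alpha}}\bra{e_{x_k,\alpha}}$ and pass to the refined product POVM whose elements are the weighted rank-one projectors $\lambda_{x_1,\alpha_1}\lambda_{x_2,\alpha_2}\ket{e_{x_1,\alpha_1}}\bra{e_{x_1,\alpha_1}}\otimes\ket{e_{x_2,\alpha_2}}\bra{e_{x_2,\alpha_2}}$. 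Since refining a POVM can only increase the classical FI while the original already attained the QFI, the refined POVM is still QCRB-saturating; hence without loss of generality the LM has rank-one form $\{\ket{f_i}\bra{f_i}\otimes\ket{g_j}\bra{g_j}\}$ with $\sum_i\ket{f_i}\bra{f_i}=I_{d_1}$ and $\sum_j\ket{g_j}\bra{g_j}=I_{d_2}$. I would then collect the $\ket{f_i}$ as columns of $U$ and the complex conjugates $\overline{\ket{g_j}}$ as columns of $V$; the completeness relations give $UU^\dagger=I$ and $VV^\dagger=I$, and a direct expansion yields $\braket{f_ig_j|\psi_\theta}=(U^\dagger AV)_{ij}=C_{ij}$ and $\braket{f_ig_j|\psi_\theta^\perp}=(U^\dagger BV)_{ij}=D_{ij}$. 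Substituting into \eqref{eq:main-ns} gives $\braket{f_ig_j|M|f_ig_j}=C_{ij}\overline{D_{ij}}-\overline{C_{ij}}D_{ij}=0$, which is \eqref{eq:bipartite-1}, while \eqref{eq:main-reg} becomes $C_{ij}=0\Rightarrow D_{ij}=0$, which is \eqref{eq:bipartite-2}.

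For sufficiency I would reverse this identification: given $U,V$ satisfying $UU^\dagger=I$, $VV^\dagger=I$ and \eqref{eq:bipartite-1}-\eqref{eq:bipartite-2}, I would read off the (subnormalized) vectors from the columns of $U$ and $\overline V$, form the product rank-one POVM on $\mH_1\otimes\mH_2$, and apply the same matrix-element computation in the opposite direction to verify \eqref{eq:main-ns}-\eqref{eq:main-reg}. The main obstacle lies in the refinement step of the converse direction: one must justify carefully that any QCRB-saturating LM can be refined to a rank-one product LM that is still QCRB-saturating, which rests on the monotonicity of classical FI under coarse-graining combined with the QFI upper bound. The remaining work is essentially bookkeeping with complex-conjugation conventions so that $C$ and $D$ really take the stated form $U^\dagger AV$ and $U^\dagger BV$ rather than carrying an extra conjugate on $V$.
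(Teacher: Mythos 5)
Your proof is correct and follows essentially the same route as the paper's: both directions identify the columns of $U$ and of $\overline{V}$ with the subnormalized rank-one components of the local effects (so that $UU^\dagger=I$ and $VV^\dagger=I$ encode POVM completeness), and translate \eqref{eq:main-ns} and \eqref{eq:main-reg} into \eqref{eq:bipartite-1} and \eqref{eq:bipartite-2} by direct matrix-element computation. The only cosmetic difference is that you justify the reduction to rank-one product effects via monotonicity of the classical Fisher information under refinement, whereas the paper reads the conditions off each spectral component of the (possibly higher-rank) effects directly; both justifications are valid.
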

\begin{proof}
On one hand, given $U$ and $V$ satisfying \eqref{eq:bipartite-1} and \eqref{eq:bipartite-2}, let $U = \sum_{i}\ket{E^{{\mathfrak s}_1}_i}\bra{i}$, $V = \sum_j \ket{\overline{E^{{\mathfrak s}_2}_j}}\bra{j}$. Then $\{E^{{\mathfrak s}_1}_i \otimes E^{{\mathfrak s}_2}_j = \ket{E^{{\mathfrak s}_1}_i}\bra{E^{{\mathfrak s}_1}_i} \otimes \ket{E^{{\mathfrak s}_2}_j}\bra{E^{{\mathfrak s}_2}_j} ,\forall i,j\}$ is a LM and the QCRB-saturating conditions \eqref{eq:ns} and \eqref{eq:reg} become
\begin{multline}
(\bra{E^{{\mathfrak s}_1}_i}\otimes\bra{E^{{\mathfrak s}_2}_j})\ket{\psi_\theta}\bra{\psi_\theta^\perp}(\ket{E^{{\mathfrak s}_1}_i}\otimes\ket{E^{{\mathfrak s}_2}_j}) = \\
(\bra{E^{{\mathfrak s}_1}_i}\otimes\bra{E^{{\mathfrak s}_2}_j})\ket{\psi_\theta^\perp}\bra{\psi_\theta}(\ket{E^{{\mathfrak s}_1}_i}\otimes\ket{E^{{\mathfrak s}_2}_j}),\forall i,j,
\end{multline}
and 
\begin{equation}
(\bra{E^{{\mathfrak s}_1}_i}\otimes\bra{E^{{\mathfrak s}_2}_j})\ket{\psi_\theta^\perp} = 0, \text{~if~} (\bra{E^{{\mathfrak s}_1}_i}\otimes\bra{E^{{\mathfrak s}_2}_j})\ket{\psi_\theta} = 0,
\end{equation}
which are equivalent to \eqref{eq:bipartite-1} and \eqref{eq:bipartite-2}. Here $\{\ket{E^{{\mathfrak s}_1}_i}\}$ and $\{\ket{E^{{\mathfrak s}_2}_j}\}$ are not unit vectors in general. When they are, $U$ and $V$ are unitary operators and give rise to a QCRB-saturating rank-one projective LM. 

On the other hand, given a QCRB-saturating LM $\{E_i^{{\mathfrak s}_1} \otimes E_i^{{\mathfrak s}_2}\}$, let $E_i^{{\mathfrak s}_{1}} = \sum_{k} \ket{E_{i,k}^{{\mathfrak s}_{1}}}\bra{E_{i,k}^{{\mathfrak s}_{1}}}$ and $E_j^{{\mathfrak s}_2} = \sum_{k} \ket{E_{j,k}^{{\mathfrak s}_2}}\bra{E_{j,k}^{{\mathfrak s}_2}}$ where $\braket{E_{i,k}^{{\mathfrak s}_{1,2}}|E_{i,k'}^{{\mathfrak s}_{1,2}}} = 0$ if $k\neq k'$ and positive if $k=k'$. Then $U = \sum_{i,k}\ket{E^{{\mathfrak s}_1}_{i,k}}\bra{i,k}$ and $V = \sum_{j,k} \ket{\overline{E^{{\mathfrak s}_2}_{j,k}}}\bra{j,k}$ satisfy \eqref{eq:bipartite-1} and \eqref{eq:bipartite-2}. 
\end{proof}

Using \lemmaref{thm:bi}, we make the following observations on the QCRB-saturating LM for bipartite pure states:
\begin{enumerate}[(a),wide,labelwidth=!,labelindent=0pt]

\item Parameter estimation is not equivalent to orthogonal state discrimination --- there exists $\ket{\psi_\theta}$ and $\ket{\psi_\theta^\perp}$ such that they cannot be distinguished using LM, but there exists a QCRB-saturating LM for them. We show in \appref{app:lm-1} that
\begin{equation}
\label{eq:lm-1}
\ket{\psi_\theta} = \frac{1}{\sqrt{2}}(\ket{00}+\ket{1+}),\quad \ket{\psi_\theta^\perp} = \frac{1}{\sqrt{2}}(\ket{01}+\ket{1-})
\end{equation} 
is the desired example. 
Note that this also implies LM is not always QCRB-saturating for type (ii) bipartite states. 


\item LM is not QCRB-saturating for all bipartite pure states --- \eqref{eq:ns} and \eqref{eq:reg} cannot always be satisfied simultaneously. Consider a two-qubit system where $A = I/2$ and $B = e^{\frac{i\pi}{4}}\sigma_x/2$. Suppose \eqref{eq:bipartite-1} and \eqref{eq:bipartite-2} are both satisfied for some $U$ and $V$. Then $U^\dagger( A B^\dagger - B A^\dagger )U$ and $V^\dagger( A^\dagger B - B^\dagger A )V$ are zero-diagonal (i.e. all diagonal elements are zero), which implies $\Re[\overline{U}_{1i}U_{2i}] = \Re[\overline{V}_{1j}V_{2j}] = 0$, $\forall i,j$. Without loss of generality, assume $U_{1i},V_{1j}\in\bR$, $U_{2i},V_{2j}\in i\bR$. Then $C_{ij}\in\bR$, $D_{ij}\in e^{i\frac{3\pi}{4}}\bR$. Therefore $C_{ij}$ and $D_{ij}$ cannot be simultaneously non-zero. According to \eqref{eq:bipartite-2}, we must have $D_{ij} = 0$ for all $i,j$ which is not possible. 

\item \eqref{eq:ns} itself can be satisfied for all $A,B \in \bC^{2\times d}$, $d \geq 2$. According to \lemmaref{thm:zero-diag}, there exists a unitary matrix $U = \sum_{i=1}^2 \ket{E^{{\mathfrak s}_1}_i}\bra{i}$ such that $U^\dagger (AB^\dagger - BA^\dagger) U$ is zero-diagonal. Furthermore, according to \lemmaref{thm:zero-diag}, there exists a unitary matrix $V = \sum_{j=1}^d \overline{\ket{E^{{\mathfrak s}_1}_j}}\bra{j}$ such that both $B^\dagger\ket{E^{{\mathfrak s}_1}_1}\bra{E^{{\mathfrak s}_1}_1}A - A^\dagger\ket{E^{{\mathfrak s}_1}_1}\bra{E^{{\mathfrak s}_1}_1}B$ and $B^\dagger\ket{E^{{\mathfrak s}_1}_2}\bra{E^{{\mathfrak s}_1}_2}A - A^\dagger\ket{E^{{\mathfrak s}_1}_2}\bra{E^{{\mathfrak s}_1}_2}B$ are zero-diagonal. Now we have the desired $U$ and $V$. In practice, one could first find $U$ and $V$ using this procedure and check whether \eqref{eq:bipartite-2} is also satisfied. If so, we have a QCRB-saturating LM. The $U$ and $V$ constructed here are both unitary and the corresponding LM is projective.

\item Projective LM is distinct from general LM --- When $A,B \in \bC^{3\times 3}$, an example exists where there is no QCRB-saturating projective LM, but there is a QCRB-saturating LM. So far, we have shown that projective LOCC is sufficient to saturate the QCRB for type (i) and (ii) states and projective LM is sufficient to satisfy \eqref{eq:ns} for bipartite pure states when $A, B\in \bC^{2\times d}$. However, when $A,B \in \bC^{3 \times 3}$, projective LM is not as powerful as general LM. We show in \appref{app:lm-2} that
\begin{equation}
\label{eq:3-3}
A = \begin{pmatrix}
{\sqrt{2}}/{2} & 0 & 0 \\
0 & {1}/{2} & 0\\
0 & 0 & 1/2\\
\end{pmatrix},\;
B = \begin{pmatrix}
{\sqrt{2}i}/{2} & 0 & 0 \\
0 & {-i}/{2} & 0\\
0 & 0 & {-i}/{2}\\
\end{pmatrix} 
\end{equation} 
is the desired example. 

\end{enumerate}

To sum up, we have found a bipartite pure state where there is no QCRB-saturating LM. We also show that the QCRB saturability problem for pure bipartite states distinguishs projective LM from general LM. However, it is not clear whether our examples could be generalized. It is an interesting open question to classify bipartitue pure states by the existence of the QCRB-saturating LM. 



\section{Conclusion and Outlook}

We have investigated the QCRB-saturating measurement to maximize the
sensitivity of quantum sensors. For arbitrary pure states or rank-two mixed states with fixed eigenbasis, we
have developed the QCRB-saturating LOCC protocol, feasible with many physical platforms by local
measurement and adaptive control~\cite{Kalb17,Chou18,bernien2017probing,zhang2017observation}. Our LOCC protocol may have applications in extensive parameter estimation
and calibration scenarios, including criticality-based quantum metrology~\cite{venuti2007quantum,rams2018limits},
quantum thermometry~\cite{de2016local,sone2018quantifying,sone2018nonclassical} and various
other cases in many body physics~\cite{tamascelli2016characterization,hauke2016measuring,zhang2018characterization}.

Our LOCC sensing protocol crucially relies on the fact that two orthogonal states can be distinguished using LOCC, so that it can be QCRB-saturating for pure states or rank-two mixed states with fixed eigenbasis. In practice, the quantum states could suffer various decoherences and our protocol might not be able to saturate the QCRB for general mixed states or for multi-parameter sensing \cite{yang2018optimal,matsumoto2002new,ragy2016compatibility,baumgratz2016quantum,yuan2016sequential,proctor2018multiparameter,ge2018distributed,qian2019heisenberg}. To tackle the decoherence, we may apply dynamical decoupling to suppress time-correlated noises~\cite{viola1999dynamical,uhrig2007keeping,biercuk2009optimized}, or introduce quantum error correction to restore unitary evolution in logical subspace even in the presence of Markovian noises~\cite{kessler2014quantum,dur2014improved,arrad2014increasing,zhou2018achieving,demkowicz2017adaptive,reiter2017dissipative}.  Therefore, it will be intriguing to further investigate LOCC sensing protocol combined with quantum error correction.

\section{Acknowledgements}
We thank the anonymous reviewer for pointing out the loophole in Theorem 1 which is now fixed by adding \eqref{eq:reg}. We thank Steven Flammia, Arpit Dua, Wen-long Ma, Shengjun Wu, Zi-wen Liu and Yau Wing Li for helpful discussions. 
We acknowledge support from the ARL-CDQI (W911NF-15-2-0067), ARO (W911NF-14-1-0011, W911NF-14-1-0563), ARO
MURI (W911NF-16-1-0349 ), AFOSR MURI (FA9550-14-1-0052, FA9550-15-1-0015), NSF (EFMA-1640959), Alfred P. Sloan Foundation (BR2013-049), and Packard Foundation (2013-39273).

\bibliographystyle{apsrev4-1}
\bibliography{local-refs}

\begin{thebibliography}{82}%
\makeatletter
\providecommand \@ifxundefined [1]{%
 \@ifx{#1\undefined}
}%
\providecommand \@ifnum [1]{%
 \ifnum #1\expandafter \@firstoftwo
 \else \expandafter \@secondoftwo
 \fi
}%
\providecommand \@ifx [1]{%
 \ifx #1\expandafter \@firstoftwo
 \else \expandafter \@secondoftwo
 \fi
}%
\providecommand \natexlab [1]{#1}%
\providecommand \enquote  [1]{``#1''}%
\providecommand \bibnamefont  [1]{#1}%
\providecommand \bibfnamefont [1]{#1}%
\providecommand \citenamefont [1]{#1}%
\providecommand \href@noop [0]{\@secondoftwo}%
\providecommand \href [0]{\begingroup \@sanitize@url \@href}%
\providecommand \@href[1]{\@@startlink{#1}\@@href}%
\providecommand \@@href[1]{\endgroup#1\@@endlink}%
\providecommand \@sanitize@url [0]{\catcode `\\12\catcode `\$12\catcode
  `\&12\catcode `\#12\catcode `\^12\catcode `\_12\catcode `\%12\relax}%
\providecommand \@@startlink[1]{}%
\providecommand \@@endlink[0]{}%
\providecommand \url  [0]{\begingroup\@sanitize@url \@url }%
\providecommand \@url [1]{\endgroup\@href {#1}{\urlprefix }}%
\providecommand \urlprefix  [0]{URL }%
\providecommand \Eprint [0]{\href }%
\providecommand \doibase [0]{http://dx.doi.org/}%
\providecommand \selectlanguage [0]{\@gobble}%
\providecommand \bibinfo  [0]{\@secondoftwo}%
\providecommand \bibfield  [0]{\@secondoftwo}%
\providecommand \translation [1]{[#1]}%
\providecommand \BibitemOpen [0]{}%
\providecommand \bibitemStop [0]{}%
\providecommand \bibitemNoStop [0]{.\EOS\space}%
\providecommand \EOS [0]{\spacefactor3000\relax}%
\providecommand \BibitemShut  [1]{\csname bibitem#1\endcsname}%
\let\auto@bib@innerbib\@empty
\bibitem [{\citenamefont {Giovannetti}\ \emph {et~al.}(2006)\citenamefont
  {Giovannetti}, \citenamefont {Lloyd},\ and\ \citenamefont
  {Maccone}}]{giovannetti2006quantum}%
  \BibitemOpen
  \bibfield  {author} {\bibinfo {author} {\bibfnamefont {V.}~\bibnamefont
  {Giovannetti}}, \bibinfo {author} {\bibfnamefont {S.}~\bibnamefont {Lloyd}},
  \ and\ \bibinfo {author} {\bibfnamefont {L.}~\bibnamefont {Maccone}},\
  }\href@noop {} {\bibfield  {journal} {\bibinfo  {journal} {Physical Review
  Letters}\ }\textbf {\bibinfo {volume} {96}},\ \bibinfo {pages} {010401}
  (\bibinfo {year} {2006})}\BibitemShut {NoStop}%
\bibitem [{\citenamefont {Giovannetti}\ \emph {et~al.}(2011)\citenamefont
  {Giovannetti}, \citenamefont {Lloyd},\ and\ \citenamefont
  {Maccone}}]{giovannetti2011advances}%
  \BibitemOpen
  \bibfield  {author} {\bibinfo {author} {\bibfnamefont {V.}~\bibnamefont
  {Giovannetti}}, \bibinfo {author} {\bibfnamefont {S.}~\bibnamefont {Lloyd}},
  \ and\ \bibinfo {author} {\bibfnamefont {L.}~\bibnamefont {Maccone}},\
  }\href@noop {} {\bibfield  {journal} {\bibinfo  {journal} {Nature Photonics}\
  }\textbf {\bibinfo {volume} {5}},\ \bibinfo {pages} {222} (\bibinfo {year}
  {2011})}\BibitemShut {NoStop}%
\bibitem [{\citenamefont {Degen}\ \emph {et~al.}(2017)\citenamefont {Degen},
  \citenamefont {Reinhard},\ and\ \citenamefont
  {Cappellaro}}]{degen2017quantum}%
  \BibitemOpen
  \bibfield  {author} {\bibinfo {author} {\bibfnamefont {C.~L.}\ \bibnamefont
  {Degen}}, \bibinfo {author} {\bibfnamefont {F.}~\bibnamefont {Reinhard}}, \
  and\ \bibinfo {author} {\bibfnamefont {P.}~\bibnamefont {Cappellaro}},\
  }\href@noop {} {\bibfield  {journal} {\bibinfo  {journal} {Review Modern
  Physics}\ }\textbf {\bibinfo {volume} {89}},\ \bibinfo {pages} {035002}
  (\bibinfo {year} {2017})}\BibitemShut {NoStop}%
\bibitem [{\citenamefont {Braun}\ \emph {et~al.}(2018)\citenamefont {Braun},
  \citenamefont {Adesso}, \citenamefont {Benatti}, \citenamefont {Floreanini},
  \citenamefont {Marzolino}, \citenamefont {Mitchell},\ and\ \citenamefont
  {Pirandola}}]{braun2018quantum}%
  \BibitemOpen
  \bibfield  {author} {\bibinfo {author} {\bibfnamefont {D.}~\bibnamefont
  {Braun}}, \bibinfo {author} {\bibfnamefont {G.}~\bibnamefont {Adesso}},
  \bibinfo {author} {\bibfnamefont {F.}~\bibnamefont {Benatti}}, \bibinfo
  {author} {\bibfnamefont {R.}~\bibnamefont {Floreanini}}, \bibinfo {author}
  {\bibfnamefont {U.}~\bibnamefont {Marzolino}}, \bibinfo {author}
  {\bibfnamefont {M.~W.}\ \bibnamefont {Mitchell}}, \ and\ \bibinfo {author}
  {\bibfnamefont {S.}~\bibnamefont {Pirandola}},\ }\href@noop {} {\bibfield
  {journal} {\bibinfo  {journal} {Review Modern Physics}\ }\textbf {\bibinfo
  {volume} {90}},\ \bibinfo {pages} {035006} (\bibinfo {year}
  {2018})}\BibitemShut {NoStop}%
\bibitem [{\citenamefont {Pezz\`e}\ \emph {et~al.}(2018)\citenamefont
  {Pezz\`e}, \citenamefont {Smerzi}, \citenamefont {Oberthaler}, \citenamefont
  {Schmied},\ and\ \citenamefont {Treutlein}}]{pezze2018quantum}%
  \BibitemOpen
  \bibfield  {author} {\bibinfo {author} {\bibfnamefont {L.}~\bibnamefont
  {Pezz\`e}}, \bibinfo {author} {\bibfnamefont {A.}~\bibnamefont {Smerzi}},
  \bibinfo {author} {\bibfnamefont {M.~K.}\ \bibnamefont {Oberthaler}},
  \bibinfo {author} {\bibfnamefont {R.}~\bibnamefont {Schmied}}, \ and\
  \bibinfo {author} {\bibfnamefont {P.}~\bibnamefont {Treutlein}},\ }\href@noop
  {} {\bibfield  {journal} {\bibinfo  {journal} {Review Modern Physics}\
  }\textbf {\bibinfo {volume} {90}},\ \bibinfo {pages} {035005} (\bibinfo
  {year} {2018})}\BibitemShut {NoStop}%
\bibitem [{\citenamefont {Pirandola}\ \emph {et~al.}(2018)\citenamefont
  {Pirandola}, \citenamefont {Bardhan}, \citenamefont {Gehring}, \citenamefont
  {Weedbrook},\ and\ \citenamefont {Lloyd}}]{pirandola2018advances}%
  \BibitemOpen
  \bibfield  {author} {\bibinfo {author} {\bibfnamefont {S.}~\bibnamefont
  {Pirandola}}, \bibinfo {author} {\bibfnamefont {B.~R.}\ \bibnamefont
  {Bardhan}}, \bibinfo {author} {\bibfnamefont {T.}~\bibnamefont {Gehring}},
  \bibinfo {author} {\bibfnamefont {C.}~\bibnamefont {Weedbrook}}, \ and\
  \bibinfo {author} {\bibfnamefont {S.}~\bibnamefont {Lloyd}},\ }\href@noop {}
  {\bibfield  {journal} {\bibinfo  {journal} {Nature Photonics}\ }\textbf
  {\bibinfo {volume} {12}},\ \bibinfo {pages} {724} (\bibinfo {year}
  {2018})}\BibitemShut {NoStop}%
\bibitem [{\citenamefont {Sanders}\ and\ \citenamefont
  {Milburn}(1995)}]{sanders1995optimal}%
  \BibitemOpen
  \bibfield  {author} {\bibinfo {author} {\bibfnamefont {B.}~\bibnamefont
  {Sanders}}\ and\ \bibinfo {author} {\bibfnamefont {G.}~\bibnamefont
  {Milburn}},\ }\href@noop {} {\bibfield  {journal} {\bibinfo  {journal}
  {Physical Review Letters}\ }\textbf {\bibinfo {volume} {75}},\ \bibinfo
  {pages} {2944} (\bibinfo {year} {1995})}\BibitemShut {NoStop}%
\bibitem [{\citenamefont {Bollinger}\ \emph {et~al.}(1996)\citenamefont
  {Bollinger}, \citenamefont {Itano}, \citenamefont {Wineland},\ and\
  \citenamefont {Heinzen}}]{bollinger1996optimal}%
  \BibitemOpen
  \bibfield  {author} {\bibinfo {author} {\bibfnamefont {J.}~\bibnamefont
  {Bollinger}}, \bibinfo {author} {\bibfnamefont {W.~M.}\ \bibnamefont
  {Itano}}, \bibinfo {author} {\bibfnamefont {D.}~\bibnamefont {Wineland}}, \
  and\ \bibinfo {author} {\bibfnamefont {D.}~\bibnamefont {Heinzen}},\
  }\href@noop {} {\bibfield  {journal} {\bibinfo  {journal} {Physical Review
  A}\ }\textbf {\bibinfo {volume} {54}},\ \bibinfo {pages} {R4649} (\bibinfo
  {year} {1996})}\BibitemShut {NoStop}%
\bibitem [{\citenamefont {Huelga}\ \emph {et~al.}(1997)\citenamefont {Huelga},
  \citenamefont {Macchiavello}, \citenamefont {Pellizzari}, \citenamefont
  {Ekert}, \citenamefont {Plenio},\ and\ \citenamefont
  {Cirac}}]{huelga1997improvement}%
  \BibitemOpen
  \bibfield  {author} {\bibinfo {author} {\bibfnamefont {S.~F.}\ \bibnamefont
  {Huelga}}, \bibinfo {author} {\bibfnamefont {C.}~\bibnamefont
  {Macchiavello}}, \bibinfo {author} {\bibfnamefont {T.}~\bibnamefont
  {Pellizzari}}, \bibinfo {author} {\bibfnamefont {A.~K.}\ \bibnamefont
  {Ekert}}, \bibinfo {author} {\bibfnamefont {M.~B.}\ \bibnamefont {Plenio}}, \
  and\ \bibinfo {author} {\bibfnamefont {J.~I.}\ \bibnamefont {Cirac}},\
  }\href@noop {} {\bibfield  {journal} {\bibinfo  {journal} {Physical Review
  Letters}\ }\textbf {\bibinfo {volume} {79}},\ \bibinfo {pages} {3865}
  (\bibinfo {year} {1997})}\BibitemShut {NoStop}%
\bibitem [{\citenamefont {Leibfried}\ \emph {et~al.}(2004)\citenamefont
  {Leibfried}, \citenamefont {Barrett}, \citenamefont {Schaetz}, \citenamefont
  {Britton}, \citenamefont {Chiaverini}, \citenamefont {Itano}, \citenamefont
  {Jost}, \citenamefont {Langer},\ and\ \citenamefont
  {Wineland}}]{leibfried2004toward}%
  \BibitemOpen
  \bibfield  {author} {\bibinfo {author} {\bibfnamefont {D.}~\bibnamefont
  {Leibfried}}, \bibinfo {author} {\bibfnamefont {M.}~\bibnamefont {Barrett}},
  \bibinfo {author} {\bibfnamefont {T.}~\bibnamefont {Schaetz}}, \bibinfo
  {author} {\bibfnamefont {J.}~\bibnamefont {Britton}}, \bibinfo {author}
  {\bibfnamefont {J.}~\bibnamefont {Chiaverini}}, \bibinfo {author}
  {\bibfnamefont {W.}~\bibnamefont {Itano}}, \bibinfo {author} {\bibfnamefont
  {J.}~\bibnamefont {Jost}}, \bibinfo {author} {\bibfnamefont {C.}~\bibnamefont
  {Langer}}, \ and\ \bibinfo {author} {\bibfnamefont {D.}~\bibnamefont
  {Wineland}},\ }\href@noop {} {\bibfield  {journal} {\bibinfo  {journal}
  {Science}\ }\textbf {\bibinfo {volume} {304}},\ \bibinfo {pages} {1476}
  (\bibinfo {year} {2004})}\BibitemShut {NoStop}%
\bibitem [{\citenamefont {Giovannetti}\ \emph {et~al.}(2004)\citenamefont
  {Giovannetti}, \citenamefont {Lloyd},\ and\ \citenamefont
  {Maccone}}]{giovannetti2004quantum}%
  \BibitemOpen
  \bibfield  {author} {\bibinfo {author} {\bibfnamefont {V.}~\bibnamefont
  {Giovannetti}}, \bibinfo {author} {\bibfnamefont {S.}~\bibnamefont {Lloyd}},
  \ and\ \bibinfo {author} {\bibfnamefont {L.}~\bibnamefont {Maccone}},\
  }\href@noop {} {\bibfield  {journal} {\bibinfo  {journal} {Science}\ }\textbf
  {\bibinfo {volume} {306}},\ \bibinfo {pages} {1330} (\bibinfo {year}
  {2004})}\BibitemShut {NoStop}%
\bibitem [{\citenamefont {Bu{\v{z}}ek}\ \emph {et~al.}(1999)\citenamefont
  {Bu{\v{z}}ek}, \citenamefont {Derka},\ and\ \citenamefont
  {Massar}}]{buvzek1999optimal}%
  \BibitemOpen
  \bibfield  {author} {\bibinfo {author} {\bibfnamefont {V.}~\bibnamefont
  {Bu{\v{z}}ek}}, \bibinfo {author} {\bibfnamefont {R.}~\bibnamefont {Derka}},
  \ and\ \bibinfo {author} {\bibfnamefont {S.}~\bibnamefont {Massar}},\
  }\href@noop {} {\bibfield  {journal} {\bibinfo  {journal} {Physical Review
  Letters}\ }\textbf {\bibinfo {volume} {82}},\ \bibinfo {pages} {2207}
  (\bibinfo {year} {1999})}\BibitemShut {NoStop}%
\bibitem [{\citenamefont {Valencia}\ \emph {et~al.}(2004)\citenamefont
  {Valencia}, \citenamefont {Scarcelli},\ and\ \citenamefont
  {Shih}}]{valencia2004distant}%
  \BibitemOpen
  \bibfield  {author} {\bibinfo {author} {\bibfnamefont {A.}~\bibnamefont
  {Valencia}}, \bibinfo {author} {\bibfnamefont {G.}~\bibnamefont {Scarcelli}},
  \ and\ \bibinfo {author} {\bibfnamefont {Y.}~\bibnamefont {Shih}},\
  }\href@noop {} {\bibfield  {journal} {\bibinfo  {journal} {Applied Physics
  Letters}\ }\textbf {\bibinfo {volume} {85}},\ \bibinfo {pages} {2655}
  (\bibinfo {year} {2004})}\BibitemShut {NoStop}%
\bibitem [{\citenamefont {de~Burgh}\ and\ \citenamefont
  {Bartlett}(2005)}]{de2005quantum}%
  \BibitemOpen
  \bibfield  {author} {\bibinfo {author} {\bibfnamefont {M.}~\bibnamefont
  {de~Burgh}}\ and\ \bibinfo {author} {\bibfnamefont {S.~D.}\ \bibnamefont
  {Bartlett}},\ }\href@noop {} {\bibfield  {journal} {\bibinfo  {journal}
  {Physical Review A}\ }\textbf {\bibinfo {volume} {72}},\ \bibinfo {pages}
  {042301} (\bibinfo {year} {2005})}\BibitemShut {NoStop}%
\bibitem [{\citenamefont {Caves}(1981)}]{caves1981quantum}%
  \BibitemOpen
  \bibfield  {author} {\bibinfo {author} {\bibfnamefont {C.~M.}\ \bibnamefont
  {Caves}},\ }\href@noop {} {\bibfield  {journal} {\bibinfo  {journal}
  {Physical Review D}\ }\textbf {\bibinfo {volume} {23}},\ \bibinfo {pages}
  {1693} (\bibinfo {year} {1981})}\BibitemShut {NoStop}%
\bibitem [{\citenamefont {Yurke}\ \emph {et~al.}(1986)\citenamefont {Yurke},
  \citenamefont {McCall},\ and\ \citenamefont {Klauder}}]{yurke19862}%
  \BibitemOpen
  \bibfield  {author} {\bibinfo {author} {\bibfnamefont {B.}~\bibnamefont
  {Yurke}}, \bibinfo {author} {\bibfnamefont {S.~L.}\ \bibnamefont {McCall}}, \
  and\ \bibinfo {author} {\bibfnamefont {J.~R.}\ \bibnamefont {Klauder}},\
  }\href@noop {} {\bibfield  {journal} {\bibinfo  {journal} {Physical Review
  A}\ }\textbf {\bibinfo {volume} {33}},\ \bibinfo {pages} {4033} (\bibinfo
  {year} {1986})}\BibitemShut {NoStop}%
\bibitem [{\citenamefont {Berry}\ and\ \citenamefont
  {Wiseman}(2000)}]{berry2000optimal}%
  \BibitemOpen
  \bibfield  {author} {\bibinfo {author} {\bibfnamefont {D.}~\bibnamefont
  {Berry}}\ and\ \bibinfo {author} {\bibfnamefont {H.}~\bibnamefont
  {Wiseman}},\ }\href@noop {} {\bibfield  {journal} {\bibinfo  {journal}
  {Physical Review Letters}\ }\textbf {\bibinfo {volume} {85}},\ \bibinfo
  {pages} {5098} (\bibinfo {year} {2000})}\BibitemShut {NoStop}%
\bibitem [{\citenamefont {Higgins}\ \emph {et~al.}(2007)\citenamefont
  {Higgins}, \citenamefont {Berry}, \citenamefont {Bartlett}, \citenamefont
  {Wiseman},\ and\ \citenamefont {Pryde}}]{higgins2007entanglement}%
  \BibitemOpen
  \bibfield  {author} {\bibinfo {author} {\bibfnamefont {B.~L.}\ \bibnamefont
  {Higgins}}, \bibinfo {author} {\bibfnamefont {D.~W.}\ \bibnamefont {Berry}},
  \bibinfo {author} {\bibfnamefont {S.~D.}\ \bibnamefont {Bartlett}}, \bibinfo
  {author} {\bibfnamefont {H.~M.}\ \bibnamefont {Wiseman}}, \ and\ \bibinfo
  {author} {\bibfnamefont {G.~J.}\ \bibnamefont {Pryde}},\ }\href@noop {}
  {\bibfield  {journal} {\bibinfo  {journal} {Nature}\ }\textbf {\bibinfo
  {volume} {450}},\ \bibinfo {pages} {393} (\bibinfo {year}
  {2007})}\BibitemShut {NoStop}%
\bibitem [{\citenamefont {Helstrom}(1976)}]{helstrom1976quantum}%
  \BibitemOpen
  \bibfield  {author} {\bibinfo {author} {\bibfnamefont {C.~W.}\ \bibnamefont
  {Helstrom}},\ }\href@noop {} {\emph {\bibinfo {title} {Quantum detection and
  estimation theory}}}\ (\bibinfo  {publisher} {Academic press},\ \bibinfo
  {year} {1976})\BibitemShut {NoStop}%
\bibitem [{\citenamefont {Helstrom}(1968)}]{helstrom1968minimum}%
  \BibitemOpen
  \bibfield  {author} {\bibinfo {author} {\bibfnamefont {C.}~\bibnamefont
  {Helstrom}},\ }\href@noop {} {\bibfield  {journal} {\bibinfo  {journal} {IEEE
  Transactions on Information Theory}\ }\textbf {\bibinfo {volume} {14}},\
  \bibinfo {pages} {234} (\bibinfo {year} {1968})}\BibitemShut {NoStop}%
\bibitem [{\citenamefont {Paris}(2009)}]{paris2009quantum}%
  \BibitemOpen
  \bibfield  {author} {\bibinfo {author} {\bibfnamefont {M.~G.}\ \bibnamefont
  {Paris}},\ }\href@noop {} {\bibfield  {journal} {\bibinfo  {journal}
  {International Journal of Quantum Information}\ }\textbf {\bibinfo {volume}
  {7}},\ \bibinfo {pages} {125} (\bibinfo {year} {2009})}\BibitemShut {NoStop}%
\bibitem [{\citenamefont {Braunstein}\ and\ \citenamefont
  {Caves}(1994)}]{braunstein1994statistical}%
  \BibitemOpen
  \bibfield  {author} {\bibinfo {author} {\bibfnamefont {S.~L.}\ \bibnamefont
  {Braunstein}}\ and\ \bibinfo {author} {\bibfnamefont {C.~M.}\ \bibnamefont
  {Caves}},\ }\href@noop {} {\bibfield  {journal} {\bibinfo  {journal}
  {Physical Review Letters}\ }\textbf {\bibinfo {volume} {72}},\ \bibinfo
  {pages} {3439} (\bibinfo {year} {1994})}\BibitemShut {NoStop}%
\bibitem [{\citenamefont {Kobayashi}\ \emph {et~al.}(2011)\citenamefont
  {Kobayashi}, \citenamefont {Mark},\ and\ \citenamefont
  {Turin}}]{kobayashi2011probability}%
  \BibitemOpen
  \bibfield  {author} {\bibinfo {author} {\bibfnamefont {H.}~\bibnamefont
  {Kobayashi}}, \bibinfo {author} {\bibfnamefont {B.~L.}\ \bibnamefont {Mark}},
  \ and\ \bibinfo {author} {\bibfnamefont {W.}~\bibnamefont {Turin}},\
  }\href@noop {} {\bibfield  {journal} {\bibinfo  {journal} {Signal Processing,
  Queueing Theory and Mathematical Finance}\ } (\bibinfo {year}
  {2011})}\BibitemShut {NoStop}%
\bibitem [{\citenamefont {Casella}\ and\ \citenamefont
  {Berger}(2002)}]{casella2002statistical}%
  \BibitemOpen
  \bibfield  {author} {\bibinfo {author} {\bibfnamefont {G.}~\bibnamefont
  {Casella}}\ and\ \bibinfo {author} {\bibfnamefont {R.~L.}\ \bibnamefont
  {Berger}},\ }\href@noop {} {\emph {\bibinfo {title} {Statistical
  inference}}},\ Vol.~\bibinfo {volume} {2}\ (\bibinfo  {publisher} {Duxbury
  Pacific Grove, CA},\ \bibinfo {year} {2002})\BibitemShut {NoStop}%
\bibitem [{\citenamefont {Lehmann}\ and\ \citenamefont
  {Casella}(2006)}]{lehmann2006theory}%
  \BibitemOpen
  \bibfield  {author} {\bibinfo {author} {\bibfnamefont {E.~L.}\ \bibnamefont
  {Lehmann}}\ and\ \bibinfo {author} {\bibfnamefont {G.}~\bibnamefont
  {Casella}},\ }\href@noop {} {\emph {\bibinfo {title} {Theory of point
  estimation}}}\ (\bibinfo  {publisher} {Springer Science \& Business Media},\
  \bibinfo {year} {2006})\BibitemShut {NoStop}%
\bibitem [{\citenamefont {Brody}\ and\ \citenamefont
  {Hughston}(1996)}]{brody1996geometry}%
  \BibitemOpen
  \bibfield  {author} {\bibinfo {author} {\bibfnamefont {D.~C.}\ \bibnamefont
  {Brody}}\ and\ \bibinfo {author} {\bibfnamefont {L.~P.}\ \bibnamefont
  {Hughston}},\ }\href@noop {} {\bibfield  {journal} {\bibinfo  {journal}
  {Physical Review Letters}\ }\textbf {\bibinfo {volume} {77}},\ \bibinfo
  {pages} {2851} (\bibinfo {year} {1996})}\BibitemShut {NoStop}%
\bibitem [{\citenamefont {Fujiwara}(2006)}]{fujiwara2006strong}%
  \BibitemOpen
  \bibfield  {author} {\bibinfo {author} {\bibfnamefont {A.}~\bibnamefont
  {Fujiwara}},\ }\href@noop {} {\bibfield  {journal} {\bibinfo  {journal}
  {Journal of Physics A Mathematical General}\ }\textbf {\bibinfo {volume}
  {39}},\ \bibinfo {pages} {12489} (\bibinfo {year} {2006})}\BibitemShut
  {NoStop}%
\bibitem [{\citenamefont {Barndorff-Nielsen}\ and\ \citenamefont
  {Gill}(2000)}]{barndorff2000fisher}%
  \BibitemOpen
  \bibfield  {author} {\bibinfo {author} {\bibfnamefont {O.}~\bibnamefont
  {Barndorff-Nielsen}}\ and\ \bibinfo {author} {\bibfnamefont {R.}~\bibnamefont
  {Gill}},\ }\href@noop {} {\bibfield  {journal} {\bibinfo  {journal} {Journal
  of Physics A: Mathematical and General}\ }\textbf {\bibinfo {volume} {33}},\
  \bibinfo {pages} {4481} (\bibinfo {year} {2000})}\BibitemShut {NoStop}%
\bibitem [{\citenamefont {Hayashi}(2011)}]{hayashi2011comparison}%
  \BibitemOpen
  \bibfield  {author} {\bibinfo {author} {\bibfnamefont {M.}~\bibnamefont
  {Hayashi}},\ }\href@noop {} {\bibfield  {journal} {\bibinfo  {journal}
  {Communications in Mathematical Physics}\ }\textbf {\bibinfo {volume}
  {304}},\ \bibinfo {pages} {689} (\bibinfo {year} {2011})}\BibitemShut
  {NoStop}%
\bibitem [{\citenamefont {Yang}\ \emph
  {et~al.}(2019{\natexlab{a}})\citenamefont {Yang}, \citenamefont
  {Chiribella},\ and\ \citenamefont {Hayashi}}]{yang2019attaining}%
  \BibitemOpen
  \bibfield  {author} {\bibinfo {author} {\bibfnamefont {Y.}~\bibnamefont
  {Yang}}, \bibinfo {author} {\bibfnamefont {G.}~\bibnamefont {Chiribella}}, \
  and\ \bibinfo {author} {\bibfnamefont {M.}~\bibnamefont {Hayashi}},\
  }\href@noop {} {\bibfield  {journal} {\bibinfo  {journal} {Communications in
  Mathematical Physics}\ }\textbf {\bibinfo {volume} {368}},\ \bibinfo {pages}
  {223} (\bibinfo {year} {2019}{\natexlab{a}})}\BibitemShut {NoStop}%
\bibitem [{\citenamefont {Gill}(2008)}]{gill2008conciliation}%
  \BibitemOpen
  \bibfield  {author} {\bibinfo {author} {\bibfnamefont {R.~D.}\ \bibnamefont
  {Gill}},\ }in\ \href@noop {} {\emph {\bibinfo {booktitle} {Quantum
  Stochastics and Information: Statistics, Filtering and Control}}}\ (\bibinfo
  {publisher} {World Scientific},\ \bibinfo {year} {2008})\ pp.\ \bibinfo
  {pages} {239--261}\BibitemShut {NoStop}%
\bibitem [{\citenamefont {Pezze}\ and\ \citenamefont
  {Smerzi}(2014)}]{pezze2014quantum}%
  \BibitemOpen
  \bibfield  {author} {\bibinfo {author} {\bibfnamefont {L.}~\bibnamefont
  {Pezze}}\ and\ \bibinfo {author} {\bibfnamefont {A.}~\bibnamefont {Smerzi}},\
  }\href@noop {} {\bibfield  {journal} {\bibinfo  {journal} {arXiv:1411.5164}\
  } (\bibinfo {year} {2014})}\BibitemShut {NoStop}%
\bibitem [{\citenamefont {Jarzyna}\ and\ \citenamefont
  {Demkowicz-Dobrza{\'n}ski}(2015)}]{jarzyna2015true}%
  \BibitemOpen
  \bibfield  {author} {\bibinfo {author} {\bibfnamefont {M.}~\bibnamefont
  {Jarzyna}}\ and\ \bibinfo {author} {\bibfnamefont {R.}~\bibnamefont
  {Demkowicz-Dobrza{\'n}ski}},\ }\href@noop {} {\bibfield  {journal} {\bibinfo
  {journal} {New Journal of Physics}\ }\textbf {\bibinfo {volume} {17}},\
  \bibinfo {pages} {013010} (\bibinfo {year} {2015})}\BibitemShut {NoStop}%
\bibitem [{\citenamefont {G\'orecki}\ \emph {et~al.}(2020)\citenamefont
  {G\'orecki}, \citenamefont {Demkowicz-Dobrza\ifmmode~\acute{n}\else
  \'{n}\fi{}ski}, \citenamefont {Wiseman},\ and\ \citenamefont
  {Berry}}]{gorecki2019pi}%
  \BibitemOpen
  \bibfield  {author} {\bibinfo {author} {\bibfnamefont {W.}~\bibnamefont
  {G\'orecki}}, \bibinfo {author} {\bibfnamefont {R.}~\bibnamefont
  {Demkowicz-Dobrza\ifmmode~\acute{n}\else \'{n}\fi{}ski}}, \bibinfo {author}
  {\bibfnamefont {H.~M.}\ \bibnamefont {Wiseman}}, \ and\ \bibinfo {author}
  {\bibfnamefont {D.~W.}\ \bibnamefont {Berry}},\ }\href@noop {} {\bibfield
  {journal} {\bibinfo  {journal} {Physical Review Letters}\ }\textbf {\bibinfo
  {volume} {124}},\ \bibinfo {pages} {030501} (\bibinfo {year}
  {2020})}\BibitemShut {NoStop}%
\bibitem [{\citenamefont {Friis}\ \emph {et~al.}(2017)\citenamefont {Friis},
  \citenamefont {Orsucci}, \citenamefont {Skotiniotis}, \citenamefont
  {Sekatski}, \citenamefont {Dunjko}, \citenamefont {Briegel},\ and\
  \citenamefont {D{\"u}r}}]{friis2017flexible}%
  \BibitemOpen
  \bibfield  {author} {\bibinfo {author} {\bibfnamefont {N.}~\bibnamefont
  {Friis}}, \bibinfo {author} {\bibfnamefont {D.}~\bibnamefont {Orsucci}},
  \bibinfo {author} {\bibfnamefont {M.}~\bibnamefont {Skotiniotis}}, \bibinfo
  {author} {\bibfnamefont {P.}~\bibnamefont {Sekatski}}, \bibinfo {author}
  {\bibfnamefont {V.}~\bibnamefont {Dunjko}}, \bibinfo {author} {\bibfnamefont
  {H.~J.}\ \bibnamefont {Briegel}}, \ and\ \bibinfo {author} {\bibfnamefont
  {W.}~\bibnamefont {D{\"u}r}},\ }\href@noop {} {\bibfield  {journal} {\bibinfo
   {journal} {New Journal of Physics}\ }\textbf {\bibinfo {volume} {19}},\
  \bibinfo {pages} {063044} (\bibinfo {year} {2017})}\BibitemShut {NoStop}%
\bibitem [{\citenamefont {Boixo}\ \emph {et~al.}(2008)\citenamefont {Boixo},
  \citenamefont {Datta}, \citenamefont {Flammia}, \citenamefont {Shaji},
  \citenamefont {Bagan},\ and\ \citenamefont {Caves}}]{boixo2008quantum}%
  \BibitemOpen
  \bibfield  {author} {\bibinfo {author} {\bibfnamefont {S.}~\bibnamefont
  {Boixo}}, \bibinfo {author} {\bibfnamefont {A.}~\bibnamefont {Datta}},
  \bibinfo {author} {\bibfnamefont {S.~T.}\ \bibnamefont {Flammia}}, \bibinfo
  {author} {\bibfnamefont {A.}~\bibnamefont {Shaji}}, \bibinfo {author}
  {\bibfnamefont {E.}~\bibnamefont {Bagan}}, \ and\ \bibinfo {author}
  {\bibfnamefont {C.~M.}\ \bibnamefont {Caves}},\ }\href@noop {} {\bibfield
  {journal} {\bibinfo  {journal} {Physical Review A}\ }\textbf {\bibinfo
  {volume} {77}},\ \bibinfo {pages} {012317} (\bibinfo {year}
  {2008})}\BibitemShut {NoStop}%
\bibitem [{\citenamefont {Roy}\ and\ \citenamefont
  {Braunstein}(2008)}]{roy2008exponentially}%
  \BibitemOpen
  \bibfield  {author} {\bibinfo {author} {\bibfnamefont {S.}~\bibnamefont
  {Roy}}\ and\ \bibinfo {author} {\bibfnamefont {S.~L.}\ \bibnamefont
  {Braunstein}},\ }\href@noop {} {\bibfield  {journal} {\bibinfo  {journal}
  {Physical Review Letters}\ }\textbf {\bibinfo {volume} {100}},\ \bibinfo
  {pages} {220501} (\bibinfo {year} {2008})}\BibitemShut {NoStop}%
\bibitem [{\citenamefont {Rams}\ \emph {et~al.}(2018)\citenamefont {Rams},
  \citenamefont {Sierant}, \citenamefont {Dutta}, \citenamefont {Horodecki},\
  and\ \citenamefont {Zakrzewski}}]{rams2018limits}%
  \BibitemOpen
  \bibfield  {author} {\bibinfo {author} {\bibfnamefont {M.~M.}\ \bibnamefont
  {Rams}}, \bibinfo {author} {\bibfnamefont {P.}~\bibnamefont {Sierant}},
  \bibinfo {author} {\bibfnamefont {O.}~\bibnamefont {Dutta}}, \bibinfo
  {author} {\bibfnamefont {P.}~\bibnamefont {Horodecki}}, \ and\ \bibinfo
  {author} {\bibfnamefont {J.}~\bibnamefont {Zakrzewski}},\ }\href@noop {}
  {\bibfield  {journal} {\bibinfo  {journal} {Physical Review X}\ }\textbf
  {\bibinfo {volume} {8}},\ \bibinfo {pages} {021022} (\bibinfo {year}
  {2018})}\BibitemShut {NoStop}%
\bibitem [{\citenamefont {Chitambar}\ \emph {et~al.}(2014)\citenamefont
  {Chitambar}, \citenamefont {Leung}, \citenamefont {Man{\v{c}}inska},
  \citenamefont {Ozols},\ and\ \citenamefont
  {Winter}}]{chitambar2014everything}%
  \BibitemOpen
  \bibfield  {author} {\bibinfo {author} {\bibfnamefont {E.}~\bibnamefont
  {Chitambar}}, \bibinfo {author} {\bibfnamefont {D.}~\bibnamefont {Leung}},
  \bibinfo {author} {\bibfnamefont {L.}~\bibnamefont {Man{\v{c}}inska}},
  \bibinfo {author} {\bibfnamefont {M.}~\bibnamefont {Ozols}}, \ and\ \bibinfo
  {author} {\bibfnamefont {A.}~\bibnamefont {Winter}},\ }\href@noop {}
  {\bibfield  {journal} {\bibinfo  {journal} {Communications in Mathematical
  Physics}\ }\textbf {\bibinfo {volume} {328}},\ \bibinfo {pages} {303}
  (\bibinfo {year} {2014})}\BibitemShut {NoStop}%
\bibitem [{\citenamefont {Cirac}\ \emph {et~al.}(1999)\citenamefont {Cirac},
  \citenamefont {Ekert}, \citenamefont {Huelga},\ and\ \citenamefont
  {Macchiavello}}]{cirac1999distributed}%
  \BibitemOpen
  \bibfield  {author} {\bibinfo {author} {\bibfnamefont {J.}~\bibnamefont
  {Cirac}}, \bibinfo {author} {\bibfnamefont {A.}~\bibnamefont {Ekert}},
  \bibinfo {author} {\bibfnamefont {S.}~\bibnamefont {Huelga}}, \ and\ \bibinfo
  {author} {\bibfnamefont {C.}~\bibnamefont {Macchiavello}},\ }\href@noop {}
  {\bibfield  {journal} {\bibinfo  {journal} {Physical Review A}\ }\textbf
  {\bibinfo {volume} {59}},\ \bibinfo {pages} {4249} (\bibinfo {year}
  {1999})}\BibitemShut {NoStop}%
\bibitem [{\citenamefont {Raussendorf}\ and\ \citenamefont
  {Briegel}(2001)}]{raussendorf2001one}%
  \BibitemOpen
  \bibfield  {author} {\bibinfo {author} {\bibfnamefont {R.}~\bibnamefont
  {Raussendorf}}\ and\ \bibinfo {author} {\bibfnamefont {H.~J.}\ \bibnamefont
  {Briegel}},\ }\href@noop {} {\bibfield  {journal} {\bibinfo  {journal}
  {Physical Review Letters}\ }\textbf {\bibinfo {volume} {86}},\ \bibinfo
  {pages} {5188} (\bibinfo {year} {2001})}\BibitemShut {NoStop}%
\bibitem [{\citenamefont {DiVincenzo}\ \emph {et~al.}(2002)\citenamefont
  {DiVincenzo}, \citenamefont {Leung},\ and\ \citenamefont
  {Terhal}}]{divincenzo2002quantum}%
  \BibitemOpen
  \bibfield  {author} {\bibinfo {author} {\bibfnamefont {D.~P.}\ \bibnamefont
  {DiVincenzo}}, \bibinfo {author} {\bibfnamefont {D.~W.}\ \bibnamefont
  {Leung}}, \ and\ \bibinfo {author} {\bibfnamefont {B.~M.}\ \bibnamefont
  {Terhal}},\ }\href@noop {} {\bibfield  {journal} {\bibinfo  {journal} {IEEE
  Transactions on Information Theory}\ }\textbf {\bibinfo {volume} {48}},\
  \bibinfo {pages} {580} (\bibinfo {year} {2002})}\BibitemShut {NoStop}%
\bibitem [{\citenamefont {Verstraete}\ and\ \citenamefont
  {Cirac}(2003)}]{verstraete2003quantum}%
  \BibitemOpen
  \bibfield  {author} {\bibinfo {author} {\bibfnamefont {F.}~\bibnamefont
  {Verstraete}}\ and\ \bibinfo {author} {\bibfnamefont {J.~I.}\ \bibnamefont
  {Cirac}},\ }\href@noop {} {\bibfield  {journal} {\bibinfo  {journal}
  {Physical Review Letters}\ }\textbf {\bibinfo {volume} {91}},\ \bibinfo
  {pages} {010404} (\bibinfo {year} {2003})}\BibitemShut {NoStop}%
\bibitem [{\citenamefont {Kalb}\ \emph {et~al.}(2017)\citenamefont {Kalb},
  \citenamefont {Reiserer}, \citenamefont {Humphreys}, \citenamefont
  {Bakermans}, \citenamefont {Kamerling}, \citenamefont {Nickerson},
  \citenamefont {Benjamin}, \citenamefont {Twitchen}, \citenamefont {Markham},\
  and\ \citenamefont {Hanson}}]{Kalb17}%
  \BibitemOpen
  \bibfield  {author} {\bibinfo {author} {\bibfnamefont {N.}~\bibnamefont
  {Kalb}}, \bibinfo {author} {\bibfnamefont {A.~A.}\ \bibnamefont {Reiserer}},
  \bibinfo {author} {\bibfnamefont {P.~C.}\ \bibnamefont {Humphreys}}, \bibinfo
  {author} {\bibfnamefont {J.~J.~W.}\ \bibnamefont {Bakermans}}, \bibinfo
  {author} {\bibfnamefont {S.~J.}\ \bibnamefont {Kamerling}}, \bibinfo {author}
  {\bibfnamefont {N.~H.}\ \bibnamefont {Nickerson}}, \bibinfo {author}
  {\bibfnamefont {S.~C.}\ \bibnamefont {Benjamin}}, \bibinfo {author}
  {\bibfnamefont {D.~J.}\ \bibnamefont {Twitchen}}, \bibinfo {author}
  {\bibfnamefont {M.}~\bibnamefont {Markham}}, \ and\ \bibinfo {author}
  {\bibfnamefont {R.}~\bibnamefont {Hanson}},\ }\href@noop {} {\bibfield
  {journal} {\bibinfo  {journal} {Science}\ }\textbf {\bibinfo {volume}
  {356}},\ \bibinfo {pages} {928} (\bibinfo {year} {2017})}\BibitemShut
  {NoStop}%
\bibitem [{\citenamefont {Chou}\ \emph {et~al.}(2018)\citenamefont {Chou},
  \citenamefont {Blumoff}, \citenamefont {Wang}, \citenamefont {Reinhold},
  \citenamefont {Axline}, \citenamefont {Gao}, \citenamefont {Frunzio},
  \citenamefont {Devoret}, \citenamefont {Jiang},\ and\ \citenamefont
  {Schoelkopf}}]{Chou18}%
  \BibitemOpen
  \bibfield  {author} {\bibinfo {author} {\bibfnamefont {K.~S.}\ \bibnamefont
  {Chou}}, \bibinfo {author} {\bibfnamefont {J.~Z.}\ \bibnamefont {Blumoff}},
  \bibinfo {author} {\bibfnamefont {C.~S.}\ \bibnamefont {Wang}}, \bibinfo
  {author} {\bibfnamefont {P.~C.}\ \bibnamefont {Reinhold}}, \bibinfo {author}
  {\bibfnamefont {C.~J.}\ \bibnamefont {Axline}}, \bibinfo {author}
  {\bibfnamefont {Y.~Y.}\ \bibnamefont {Gao}}, \bibinfo {author} {\bibfnamefont
  {L.}~\bibnamefont {Frunzio}}, \bibinfo {author} {\bibfnamefont
  {M.}~\bibnamefont {Devoret}}, \bibinfo {author} {\bibfnamefont
  {L.}~\bibnamefont {Jiang}}, \ and\ \bibinfo {author} {\bibfnamefont
  {R.}~\bibnamefont {Schoelkopf}},\ }\href@noop {} {\bibfield  {journal}
  {\bibinfo  {journal} {Nature}\ }\textbf {\bibinfo {volume} {561}},\ \bibinfo
  {pages} {368} (\bibinfo {year} {2018})}\BibitemShut {NoStop}%
\bibitem [{\citenamefont {Bernien}\ \emph {et~al.}(2017)\citenamefont
  {Bernien}, \citenamefont {Schwartz}, \citenamefont {Keesling}, \citenamefont
  {Levine}, \citenamefont {Omran}, \citenamefont {Pichler}, \citenamefont
  {Choi}, \citenamefont {Zibrov}, \citenamefont {Endres}, \citenamefont
  {Greiner} \emph {et~al.}}]{bernien2017probing}%
  \BibitemOpen
  \bibfield  {author} {\bibinfo {author} {\bibfnamefont {H.}~\bibnamefont
  {Bernien}}, \bibinfo {author} {\bibfnamefont {S.}~\bibnamefont {Schwartz}},
  \bibinfo {author} {\bibfnamefont {A.}~\bibnamefont {Keesling}}, \bibinfo
  {author} {\bibfnamefont {H.}~\bibnamefont {Levine}}, \bibinfo {author}
  {\bibfnamefont {A.}~\bibnamefont {Omran}}, \bibinfo {author} {\bibfnamefont
  {H.}~\bibnamefont {Pichler}}, \bibinfo {author} {\bibfnamefont
  {S.}~\bibnamefont {Choi}}, \bibinfo {author} {\bibfnamefont {A.~S.}\
  \bibnamefont {Zibrov}}, \bibinfo {author} {\bibfnamefont {M.}~\bibnamefont
  {Endres}}, \bibinfo {author} {\bibfnamefont {M.}~\bibnamefont {Greiner}},
  \emph {et~al.},\ }\href@noop {} {\bibfield  {journal} {\bibinfo  {journal}
  {Nature}\ }\textbf {\bibinfo {volume} {551}},\ \bibinfo {pages} {579}
  (\bibinfo {year} {2017})}\BibitemShut {NoStop}%
\bibitem [{\citenamefont {Zhang}\ \emph {et~al.}(2017)\citenamefont {Zhang},
  \citenamefont {Pagano}, \citenamefont {Hess}, \citenamefont {Kyprianidis},
  \citenamefont {Becker}, \citenamefont {Kaplan}, \citenamefont {Gorshkov},
  \citenamefont {Gong},\ and\ \citenamefont {Monroe}}]{zhang2017observation}%
  \BibitemOpen
  \bibfield  {author} {\bibinfo {author} {\bibfnamefont {J.}~\bibnamefont
  {Zhang}}, \bibinfo {author} {\bibfnamefont {G.}~\bibnamefont {Pagano}},
  \bibinfo {author} {\bibfnamefont {P.~W.}\ \bibnamefont {Hess}}, \bibinfo
  {author} {\bibfnamefont {A.}~\bibnamefont {Kyprianidis}}, \bibinfo {author}
  {\bibfnamefont {P.}~\bibnamefont {Becker}}, \bibinfo {author} {\bibfnamefont
  {H.}~\bibnamefont {Kaplan}}, \bibinfo {author} {\bibfnamefont {A.~V.}\
  \bibnamefont {Gorshkov}}, \bibinfo {author} {\bibfnamefont {Z.-X.}\
  \bibnamefont {Gong}}, \ and\ \bibinfo {author} {\bibfnamefont
  {C.}~\bibnamefont {Monroe}},\ }\href@noop {} {\bibfield  {journal} {\bibinfo
  {journal} {Nature}\ }\textbf {\bibinfo {volume} {551}},\ \bibinfo {pages}
  {601} (\bibinfo {year} {2017})}\BibitemShut {NoStop}%
\bibitem [{\citenamefont {Bagan}\ \emph {et~al.}(2004)\citenamefont {Bagan},
  \citenamefont {Baig}, \citenamefont {Mu{\~n}oz-Tapia},\ and\ \citenamefont
  {Rodriguez}}]{bagan2004collective}%
  \BibitemOpen
  \bibfield  {author} {\bibinfo {author} {\bibfnamefont {E.}~\bibnamefont
  {Bagan}}, \bibinfo {author} {\bibfnamefont {M.}~\bibnamefont {Baig}},
  \bibinfo {author} {\bibfnamefont {R.}~\bibnamefont {Mu{\~n}oz-Tapia}}, \ and\
  \bibinfo {author} {\bibfnamefont {A.}~\bibnamefont {Rodriguez}},\ }\href@noop
  {} {\bibfield  {journal} {\bibinfo  {journal} {Physical Review A}\ }\textbf
  {\bibinfo {volume} {69}},\ \bibinfo {pages} {010304} (\bibinfo {year}
  {2004})}\BibitemShut {NoStop}%
\bibitem [{\citenamefont {Bagan}\ \emph {et~al.}(2006)\citenamefont {Bagan},
  \citenamefont {Ballester}, \citenamefont {Gill}, \citenamefont
  {Mu{\~n}oz-Tapia},\ and\ \citenamefont {Romero-Isart}}]{bagan2006separable}%
  \BibitemOpen
  \bibfield  {author} {\bibinfo {author} {\bibfnamefont {E.}~\bibnamefont
  {Bagan}}, \bibinfo {author} {\bibfnamefont {M.}~\bibnamefont {Ballester}},
  \bibinfo {author} {\bibfnamefont {R.}~\bibnamefont {Gill}}, \bibinfo {author}
  {\bibfnamefont {R.}~\bibnamefont {Mu{\~n}oz-Tapia}}, \ and\ \bibinfo {author}
  {\bibfnamefont {O.}~\bibnamefont {Romero-Isart}},\ }\href@noop {} {\bibfield
  {journal} {\bibinfo  {journal} {Physical Review Letters}\ }\textbf {\bibinfo
  {volume} {97}},\ \bibinfo {pages} {130501} (\bibinfo {year}
  {2006})}\BibitemShut {NoStop}%
\bibitem [{\citenamefont {Calsamiglia}\ \emph {et~al.}(2010)\citenamefont
  {Calsamiglia}, \citenamefont {de~Vicente}, \citenamefont {Mu{\~n}oz-Tapia},\
  and\ \citenamefont {Bagan}}]{calsamiglia2010local}%
  \BibitemOpen
  \bibfield  {author} {\bibinfo {author} {\bibfnamefont {J.}~\bibnamefont
  {Calsamiglia}}, \bibinfo {author} {\bibfnamefont {J.}~\bibnamefont
  {de~Vicente}}, \bibinfo {author} {\bibfnamefont {R.}~\bibnamefont
  {Mu{\~n}oz-Tapia}}, \ and\ \bibinfo {author} {\bibfnamefont {E.}~\bibnamefont
  {Bagan}},\ }\href@noop {} {\bibfield  {journal} {\bibinfo  {journal}
  {Physical Review Letters}\ }\textbf {\bibinfo {volume} {105}},\ \bibinfo
  {pages} {080504} (\bibinfo {year} {2010})}\BibitemShut {NoStop}%
\bibitem [{\citenamefont {Ghosh}\ \emph {et~al.}(2001)\citenamefont {Ghosh},
  \citenamefont {Kar}, \citenamefont {Roy}, \citenamefont {Sen}, \citenamefont
  {Sen} \emph {et~al.}}]{ghosh2001distinguishability}%
  \BibitemOpen
  \bibfield  {author} {\bibinfo {author} {\bibfnamefont {S.}~\bibnamefont
  {Ghosh}}, \bibinfo {author} {\bibfnamefont {G.}~\bibnamefont {Kar}}, \bibinfo
  {author} {\bibfnamefont {A.}~\bibnamefont {Roy}}, \bibinfo {author}
  {\bibfnamefont {A.}~\bibnamefont {Sen}}, \bibinfo {author} {\bibfnamefont
  {U.}~\bibnamefont {Sen}},  \emph {et~al.},\ }\href@noop {} {\bibfield
  {journal} {\bibinfo  {journal} {Physical Review Letters}\ }\textbf {\bibinfo
  {volume} {87}},\ \bibinfo {pages} {277902} (\bibinfo {year}
  {2001})}\BibitemShut {NoStop}%
\bibitem [{\citenamefont {Walgate}\ \emph {et~al.}(2000)\citenamefont
  {Walgate}, \citenamefont {Short}, \citenamefont {Hardy},\ and\ \citenamefont
  {Vedral}}]{walgate2000local}%
  \BibitemOpen
  \bibfield  {author} {\bibinfo {author} {\bibfnamefont {J.}~\bibnamefont
  {Walgate}}, \bibinfo {author} {\bibfnamefont {A.~J.}\ \bibnamefont {Short}},
  \bibinfo {author} {\bibfnamefont {L.}~\bibnamefont {Hardy}}, \ and\ \bibinfo
  {author} {\bibfnamefont {V.}~\bibnamefont {Vedral}},\ }\href@noop {}
  {\bibfield  {journal} {\bibinfo  {journal} {Physical Review Letters}\
  }\textbf {\bibinfo {volume} {85}},\ \bibinfo {pages} {4972} (\bibinfo {year}
  {2000})}\BibitemShut {NoStop}%
\bibitem [{\citenamefont {Ac{\'\i}n}\ \emph {et~al.}(2000)\citenamefont
  {Ac{\'\i}n}, \citenamefont {Tarrach},\ and\ \citenamefont
  {Vidal}}]{acin2000optimal}%
  \BibitemOpen
  \bibfield  {author} {\bibinfo {author} {\bibfnamefont {A.}~\bibnamefont
  {Ac{\'\i}n}}, \bibinfo {author} {\bibfnamefont {R.}~\bibnamefont {Tarrach}},
  \ and\ \bibinfo {author} {\bibfnamefont {G.}~\bibnamefont {Vidal}},\
  }\href@noop {} {\bibfield  {journal} {\bibinfo  {journal} {Physical Review
  A}\ }\textbf {\bibinfo {volume} {61}},\ \bibinfo {pages} {062307} (\bibinfo
  {year} {2000})}\BibitemShut {NoStop}%
\bibitem [{\citenamefont {Bagan}\ \emph {et~al.}(2005)\citenamefont {Bagan},
  \citenamefont {Ballester}, \citenamefont {Munoz-Tapia},\ and\ \citenamefont
  {Romero-Isart}}]{bagan2005purity}%
  \BibitemOpen
  \bibfield  {author} {\bibinfo {author} {\bibfnamefont {E.}~\bibnamefont
  {Bagan}}, \bibinfo {author} {\bibfnamefont {M.}~\bibnamefont {Ballester}},
  \bibinfo {author} {\bibfnamefont {R.}~\bibnamefont {Munoz-Tapia}}, \ and\
  \bibinfo {author} {\bibfnamefont {O.}~\bibnamefont {Romero-Isart}},\
  }\href@noop {} {\bibfield  {journal} {\bibinfo  {journal} {Physical Review
  Letters}\ }\textbf {\bibinfo {volume} {95}},\ \bibinfo {pages} {110504}
  (\bibinfo {year} {2005})}\BibitemShut {NoStop}%
\bibitem [{\citenamefont {Duan}\ \emph {et~al.}(2007)\citenamefont {Duan},
  \citenamefont {Feng}, \citenamefont {Ji},\ and\ \citenamefont
  {Ying}}]{duan2007distinguishing}%
  \BibitemOpen
  \bibfield  {author} {\bibinfo {author} {\bibfnamefont {R.}~\bibnamefont
  {Duan}}, \bibinfo {author} {\bibfnamefont {Y.}~\bibnamefont {Feng}}, \bibinfo
  {author} {\bibfnamefont {Z.}~\bibnamefont {Ji}}, \ and\ \bibinfo {author}
  {\bibfnamefont {M.}~\bibnamefont {Ying}},\ }\href@noop {} {\bibfield
  {journal} {\bibinfo  {journal} {Physical Review Letters}\ }\textbf {\bibinfo
  {volume} {98}},\ \bibinfo {pages} {230502} (\bibinfo {year}
  {2007})}\BibitemShut {NoStop}%
\bibitem [{\citenamefont {Yu}\ \emph {et~al.}(2011)\citenamefont {Yu},
  \citenamefont {Duan},\ and\ \citenamefont {Ying}}]{yu2011any}%
  \BibitemOpen
  \bibfield  {author} {\bibinfo {author} {\bibfnamefont {N.}~\bibnamefont
  {Yu}}, \bibinfo {author} {\bibfnamefont {R.}~\bibnamefont {Duan}}, \ and\
  \bibinfo {author} {\bibfnamefont {M.}~\bibnamefont {Ying}},\ }\href@noop {}
  {\bibfield  {journal} {\bibinfo  {journal} {Physical Review A}\ }\textbf
  {\bibinfo {volume} {84}},\ \bibinfo {pages} {012304} (\bibinfo {year}
  {2011})}\BibitemShut {NoStop}%
\bibitem [{\citenamefont {Hayashi}\ \emph {et~al.}(2006)\citenamefont
  {Hayashi}, \citenamefont {Markham}, \citenamefont {Murao}, \citenamefont
  {Owari},\ and\ \citenamefont {Virmani}}]{hayashi2006bounds}%
  \BibitemOpen
  \bibfield  {author} {\bibinfo {author} {\bibfnamefont {M.}~\bibnamefont
  {Hayashi}}, \bibinfo {author} {\bibfnamefont {D.}~\bibnamefont {Markham}},
  \bibinfo {author} {\bibfnamefont {M.}~\bibnamefont {Murao}}, \bibinfo
  {author} {\bibfnamefont {M.}~\bibnamefont {Owari}}, \ and\ \bibinfo {author}
  {\bibfnamefont {S.}~\bibnamefont {Virmani}},\ }\href@noop {} {\bibfield
  {journal} {\bibinfo  {journal} {Physical Review Letters}\ }\textbf {\bibinfo
  {volume} {96}},\ \bibinfo {pages} {040501} (\bibinfo {year}
  {2006})}\BibitemShut {NoStop}%
\bibitem [{\citenamefont {Zhu}\ and\ \citenamefont
  {Hayashi}(2018)}]{zhu2018universally}%
  \BibitemOpen
  \bibfield  {author} {\bibinfo {author} {\bibfnamefont {H.}~\bibnamefont
  {Zhu}}\ and\ \bibinfo {author} {\bibfnamefont {M.}~\bibnamefont {Hayashi}},\
  }\href@noop {} {\bibfield  {journal} {\bibinfo  {journal} {Physical Review
  Letters}\ }\textbf {\bibinfo {volume} {120}},\ \bibinfo {pages} {030404}
  (\bibinfo {year} {2018})}\BibitemShut {NoStop}%
\bibitem [{\citenamefont {De~Pasquale}\ \emph {et~al.}(2016)\citenamefont
  {De~Pasquale}, \citenamefont {Rossini}, \citenamefont {Fazio},\ and\
  \citenamefont {Giovannetti}}]{de2016local}%
  \BibitemOpen
  \bibfield  {author} {\bibinfo {author} {\bibfnamefont {A.}~\bibnamefont
  {De~Pasquale}}, \bibinfo {author} {\bibfnamefont {D.}~\bibnamefont
  {Rossini}}, \bibinfo {author} {\bibfnamefont {R.}~\bibnamefont {Fazio}}, \
  and\ \bibinfo {author} {\bibfnamefont {V.}~\bibnamefont {Giovannetti}},\
  }\href@noop {} {\bibfield  {journal} {\bibinfo  {journal} {Nature
  Communications}\ }\textbf {\bibinfo {volume} {7}},\ \bibinfo {pages} {12782}
  (\bibinfo {year} {2016})}\BibitemShut {NoStop}%
\bibitem [{\citenamefont {Sone}\ \emph {et~al.}(2018)\citenamefont {Sone},
  \citenamefont {Zhuang},\ and\ \citenamefont
  {Cappellaro}}]{sone2018quantifying}%
  \BibitemOpen
  \bibfield  {author} {\bibinfo {author} {\bibfnamefont {A.}~\bibnamefont
  {Sone}}, \bibinfo {author} {\bibfnamefont {Q.}~\bibnamefont {Zhuang}}, \ and\
  \bibinfo {author} {\bibfnamefont {P.}~\bibnamefont {Cappellaro}},\
  }\href@noop {} {\bibfield  {journal} {\bibinfo  {journal} {Physical Review
  A}\ }\textbf {\bibinfo {volume} {98}},\ \bibinfo {pages} {012115} (\bibinfo
  {year} {2018})}\BibitemShut {NoStop}%
\bibitem [{\citenamefont {Sone}\ \emph {et~al.}(2019)\citenamefont {Sone},
  \citenamefont {Zhuang}, \citenamefont {Li}, \citenamefont {Liu},\ and\
  \citenamefont {Cappellaro}}]{sone2018nonclassical}%
  \BibitemOpen
  \bibfield  {author} {\bibinfo {author} {\bibfnamefont {A.}~\bibnamefont
  {Sone}}, \bibinfo {author} {\bibfnamefont {Q.}~\bibnamefont {Zhuang}},
  \bibinfo {author} {\bibfnamefont {C.}~\bibnamefont {Li}}, \bibinfo {author}
  {\bibfnamefont {Y.-X.}\ \bibnamefont {Liu}}, \ and\ \bibinfo {author}
  {\bibfnamefont {P.}~\bibnamefont {Cappellaro}},\ }\href@noop {} {\bibfield
  {journal} {\bibinfo  {journal} {Physical Review A}\ }\textbf {\bibinfo
  {volume} {99}},\ \bibinfo {pages} {052318} (\bibinfo {year}
  {2019})}\BibitemShut {NoStop}%
\bibitem [{\citenamefont {Venuti}\ and\ \citenamefont
  {Zanardi}(2007)}]{venuti2007quantum}%
  \BibitemOpen
  \bibfield  {author} {\bibinfo {author} {\bibfnamefont {L.~C.}\ \bibnamefont
  {Venuti}}\ and\ \bibinfo {author} {\bibfnamefont {P.}~\bibnamefont
  {Zanardi}},\ }\href@noop {} {\bibfield  {journal} {\bibinfo  {journal}
  {Physical Review Letters}\ }\textbf {\bibinfo {volume} {99}},\ \bibinfo
  {pages} {095701} (\bibinfo {year} {2007})}\BibitemShut {NoStop}%
\bibitem [{\citenamefont {Tamascelli}\ \emph {et~al.}(2016)\citenamefont
  {Tamascelli}, \citenamefont {Benedetti}, \citenamefont {Olivares},\ and\
  \citenamefont {Paris}}]{tamascelli2016characterization}%
  \BibitemOpen
  \bibfield  {author} {\bibinfo {author} {\bibfnamefont {D.}~\bibnamefont
  {Tamascelli}}, \bibinfo {author} {\bibfnamefont {C.}~\bibnamefont
  {Benedetti}}, \bibinfo {author} {\bibfnamefont {S.}~\bibnamefont {Olivares}},
  \ and\ \bibinfo {author} {\bibfnamefont {M.~G.}\ \bibnamefont {Paris}},\
  }\href@noop {} {\bibfield  {journal} {\bibinfo  {journal} {Physical Review
  A}\ }\textbf {\bibinfo {volume} {94}},\ \bibinfo {pages} {042129} (\bibinfo
  {year} {2016})}\BibitemShut {NoStop}%
\bibitem [{\citenamefont {Hauke}\ \emph {et~al.}(2016)\citenamefont {Hauke},
  \citenamefont {Heyl}, \citenamefont {Tagliacozzo},\ and\ \citenamefont
  {Zoller}}]{hauke2016measuring}%
  \BibitemOpen
  \bibfield  {author} {\bibinfo {author} {\bibfnamefont {P.}~\bibnamefont
  {Hauke}}, \bibinfo {author} {\bibfnamefont {M.}~\bibnamefont {Heyl}},
  \bibinfo {author} {\bibfnamefont {L.}~\bibnamefont {Tagliacozzo}}, \ and\
  \bibinfo {author} {\bibfnamefont {P.}~\bibnamefont {Zoller}},\ }\href@noop {}
  {\bibfield  {journal} {\bibinfo  {journal} {Nature Physics}\ }\textbf
  {\bibinfo {volume} {12}},\ \bibinfo {pages} {778} (\bibinfo {year}
  {2016})}\BibitemShut {NoStop}%
\bibitem [{\citenamefont {Zhang}\ \emph {et~al.}(2018)\citenamefont {Zhang},
  \citenamefont {Zeng}, \citenamefont {Fan}, \citenamefont {You},\ and\
  \citenamefont {Nori}}]{zhang2018characterization}%
  \BibitemOpen
  \bibfield  {author} {\bibinfo {author} {\bibfnamefont {Y.-R.}\ \bibnamefont
  {Zhang}}, \bibinfo {author} {\bibfnamefont {Y.}~\bibnamefont {Zeng}},
  \bibinfo {author} {\bibfnamefont {H.}~\bibnamefont {Fan}}, \bibinfo {author}
  {\bibfnamefont {J.}~\bibnamefont {You}}, \ and\ \bibinfo {author}
  {\bibfnamefont {F.}~\bibnamefont {Nori}},\ }\href@noop {} {\bibfield
  {journal} {\bibinfo  {journal} {Physical Review Letters}\ }\textbf {\bibinfo
  {volume} {120}},\ \bibinfo {pages} {250501} (\bibinfo {year}
  {2018})}\BibitemShut {NoStop}%
\bibitem [{\citenamefont {Yang}\ \emph
  {et~al.}(2019{\natexlab{b}})\citenamefont {Yang}, \citenamefont {Pang},
  \citenamefont {Zhou},\ and\ \citenamefont {Jordan}}]{yang2018optimal}%
  \BibitemOpen
  \bibfield  {author} {\bibinfo {author} {\bibfnamefont {J.}~\bibnamefont
  {Yang}}, \bibinfo {author} {\bibfnamefont {S.}~\bibnamefont {Pang}}, \bibinfo
  {author} {\bibfnamefont {Y.}~\bibnamefont {Zhou}}, \ and\ \bibinfo {author}
  {\bibfnamefont {A.~N.}\ \bibnamefont {Jordan}},\ }\href@noop {} {\bibfield
  {journal} {\bibinfo  {journal} {Physical Review A}\ }\textbf {\bibinfo
  {volume} {100}},\ \bibinfo {pages} {032104} (\bibinfo {year}
  {2019}{\natexlab{b}})}\BibitemShut {NoStop}%
\bibitem [{\citenamefont {Matsumoto}(2002)}]{matsumoto2002new}%
  \BibitemOpen
  \bibfield  {author} {\bibinfo {author} {\bibfnamefont {K.}~\bibnamefont
  {Matsumoto}},\ }\href@noop {} {\bibfield  {journal} {\bibinfo  {journal}
  {Journal of Physics A: Mathematical and General}\ }\textbf {\bibinfo {volume}
  {35}},\ \bibinfo {pages} {3111} (\bibinfo {year} {2002})}\BibitemShut
  {NoStop}%
\bibitem [{\citenamefont {Ragy}\ \emph {et~al.}(2016)\citenamefont {Ragy},
  \citenamefont {Jarzyna},\ and\ \citenamefont
  {Demkowicz-Dobrza{\'n}ski}}]{ragy2016compatibility}%
  \BibitemOpen
  \bibfield  {author} {\bibinfo {author} {\bibfnamefont {S.}~\bibnamefont
  {Ragy}}, \bibinfo {author} {\bibfnamefont {M.}~\bibnamefont {Jarzyna}}, \
  and\ \bibinfo {author} {\bibfnamefont {R.}~\bibnamefont
  {Demkowicz-Dobrza{\'n}ski}},\ }\href@noop {} {\bibfield  {journal} {\bibinfo
  {journal} {Physical Review A}\ }\textbf {\bibinfo {volume} {94}},\ \bibinfo
  {pages} {052108} (\bibinfo {year} {2016})}\BibitemShut {NoStop}%
\bibitem [{\citenamefont {Baumgratz}\ and\ \citenamefont
  {Datta}(2016)}]{baumgratz2016quantum}%
  \BibitemOpen
  \bibfield  {author} {\bibinfo {author} {\bibfnamefont {T.}~\bibnamefont
  {Baumgratz}}\ and\ \bibinfo {author} {\bibfnamefont {A.}~\bibnamefont
  {Datta}},\ }\href@noop {} {\bibfield  {journal} {\bibinfo  {journal}
  {Physical Review Letters}\ }\textbf {\bibinfo {volume} {116}},\ \bibinfo
  {pages} {030801} (\bibinfo {year} {2016})}\BibitemShut {NoStop}%
\bibitem [{\citenamefont {Yuan}(2016)}]{yuan2016sequential}%
  \BibitemOpen
  \bibfield  {author} {\bibinfo {author} {\bibfnamefont {H.}~\bibnamefont
  {Yuan}},\ }\href@noop {} {\bibfield  {journal} {\bibinfo  {journal} {Physical
  Review Letters}\ }\textbf {\bibinfo {volume} {117}},\ \bibinfo {pages}
  {160801} (\bibinfo {year} {2016})}\BibitemShut {NoStop}%
\bibitem [{\citenamefont {Proctor}\ \emph {et~al.}(2018)\citenamefont
  {Proctor}, \citenamefont {Knott},\ and\ \citenamefont
  {Dunningham}}]{proctor2018multiparameter}%
  \BibitemOpen
  \bibfield  {author} {\bibinfo {author} {\bibfnamefont {T.~J.}\ \bibnamefont
  {Proctor}}, \bibinfo {author} {\bibfnamefont {P.~A.}\ \bibnamefont {Knott}},
  \ and\ \bibinfo {author} {\bibfnamefont {J.~A.}\ \bibnamefont {Dunningham}},\
  }\href@noop {} {\bibfield  {journal} {\bibinfo  {journal} {Physical Review
  Letters}\ }\textbf {\bibinfo {volume} {120}},\ \bibinfo {pages} {080501}
  (\bibinfo {year} {2018})}\BibitemShut {NoStop}%
\bibitem [{\citenamefont {Ge}\ \emph {et~al.}(2018)\citenamefont {Ge},
  \citenamefont {Jacobs}, \citenamefont {Eldredge}, \citenamefont {Gorshkov},\
  and\ \citenamefont {Foss-Feig}}]{ge2018distributed}%
  \BibitemOpen
  \bibfield  {author} {\bibinfo {author} {\bibfnamefont {W.}~\bibnamefont
  {Ge}}, \bibinfo {author} {\bibfnamefont {K.}~\bibnamefont {Jacobs}}, \bibinfo
  {author} {\bibfnamefont {Z.}~\bibnamefont {Eldredge}}, \bibinfo {author}
  {\bibfnamefont {A.~V.}\ \bibnamefont {Gorshkov}}, \ and\ \bibinfo {author}
  {\bibfnamefont {M.}~\bibnamefont {Foss-Feig}},\ }\href@noop {} {\bibfield
  {journal} {\bibinfo  {journal} {Physical Review Letters}\ }\textbf {\bibinfo
  {volume} {121}},\ \bibinfo {pages} {043604} (\bibinfo {year}
  {2018})}\BibitemShut {NoStop}%
\bibitem [{\citenamefont {Qian}\ \emph {et~al.}(2019)\citenamefont {Qian},
  \citenamefont {Eldredge}, \citenamefont {Ge}, \citenamefont {Pagano},
  \citenamefont {Monroe}, \citenamefont {Porto},\ and\ \citenamefont
  {Gorshkov}}]{qian2019heisenberg}%
  \BibitemOpen
  \bibfield  {author} {\bibinfo {author} {\bibfnamefont {K.}~\bibnamefont
  {Qian}}, \bibinfo {author} {\bibfnamefont {Z.}~\bibnamefont {Eldredge}},
  \bibinfo {author} {\bibfnamefont {W.}~\bibnamefont {Ge}}, \bibinfo {author}
  {\bibfnamefont {G.}~\bibnamefont {Pagano}}, \bibinfo {author} {\bibfnamefont
  {C.}~\bibnamefont {Monroe}}, \bibinfo {author} {\bibfnamefont {J.~V.}\
  \bibnamefont {Porto}}, \ and\ \bibinfo {author} {\bibfnamefont {A.~V.}\
  \bibnamefont {Gorshkov}},\ }\href@noop {} {\bibfield  {journal} {\bibinfo
  {journal} {Phys. Rev. A}\ }\textbf {\bibinfo {volume} {100}},\ \bibinfo
  {pages} {042304} (\bibinfo {year} {2019})}\BibitemShut {NoStop}%
\bibitem [{\citenamefont {Viola}\ \emph {et~al.}(1999)\citenamefont {Viola},
  \citenamefont {Knill},\ and\ \citenamefont {Lloyd}}]{viola1999dynamical}%
  \BibitemOpen
  \bibfield  {author} {\bibinfo {author} {\bibfnamefont {L.}~\bibnamefont
  {Viola}}, \bibinfo {author} {\bibfnamefont {E.}~\bibnamefont {Knill}}, \ and\
  \bibinfo {author} {\bibfnamefont {S.}~\bibnamefont {Lloyd}},\ }\href@noop {}
  {\bibfield  {journal} {\bibinfo  {journal} {Physical Review Letters}\
  }\textbf {\bibinfo {volume} {82}},\ \bibinfo {pages} {2417} (\bibinfo {year}
  {1999})}\BibitemShut {NoStop}%
\bibitem [{\citenamefont {Uhrig}(2007)}]{uhrig2007keeping}%
  \BibitemOpen
  \bibfield  {author} {\bibinfo {author} {\bibfnamefont {G.~S.}\ \bibnamefont
  {Uhrig}},\ }\href@noop {} {\bibfield  {journal} {\bibinfo  {journal}
  {Physical Review Letters}\ }\textbf {\bibinfo {volume} {98}},\ \bibinfo
  {pages} {100504} (\bibinfo {year} {2007})}\BibitemShut {NoStop}%
\bibitem [{\citenamefont {Biercuk}\ \emph {et~al.}(2009)\citenamefont
  {Biercuk}, \citenamefont {Uys}, \citenamefont {VanDevender}, \citenamefont
  {Shiga}, \citenamefont {Itano},\ and\ \citenamefont
  {Bollinger}}]{biercuk2009optimized}%
  \BibitemOpen
  \bibfield  {author} {\bibinfo {author} {\bibfnamefont {M.~J.}\ \bibnamefont
  {Biercuk}}, \bibinfo {author} {\bibfnamefont {H.}~\bibnamefont {Uys}},
  \bibinfo {author} {\bibfnamefont {A.~P.}\ \bibnamefont {VanDevender}},
  \bibinfo {author} {\bibfnamefont {N.}~\bibnamefont {Shiga}}, \bibinfo
  {author} {\bibfnamefont {W.~M.}\ \bibnamefont {Itano}}, \ and\ \bibinfo
  {author} {\bibfnamefont {J.~J.}\ \bibnamefont {Bollinger}},\ }\href@noop {}
  {\bibfield  {journal} {\bibinfo  {journal} {Nature}\ }\textbf {\bibinfo
  {volume} {458}},\ \bibinfo {pages} {996} (\bibinfo {year}
  {2009})}\BibitemShut {NoStop}%
\bibitem [{\citenamefont {Kessler}\ \emph {et~al.}(2014)\citenamefont
  {Kessler}, \citenamefont {Lovchinsky}, \citenamefont {Sushkov},\ and\
  \citenamefont {Lukin}}]{kessler2014quantum}%
  \BibitemOpen
  \bibfield  {author} {\bibinfo {author} {\bibfnamefont {E.~M.}\ \bibnamefont
  {Kessler}}, \bibinfo {author} {\bibfnamefont {I.}~\bibnamefont {Lovchinsky}},
  \bibinfo {author} {\bibfnamefont {A.~O.}\ \bibnamefont {Sushkov}}, \ and\
  \bibinfo {author} {\bibfnamefont {M.~D.}\ \bibnamefont {Lukin}},\ }\href@noop
  {} {\bibfield  {journal} {\bibinfo  {journal} {Physical Review Letters}\
  }\textbf {\bibinfo {volume} {112}},\ \bibinfo {pages} {150802} (\bibinfo
  {year} {2014})}\BibitemShut {NoStop}%
\bibitem [{\citenamefont {D{\"u}r}\ \emph {et~al.}(2014)\citenamefont
  {D{\"u}r}, \citenamefont {Skotiniotis}, \citenamefont {Froewis},\ and\
  \citenamefont {Kraus}}]{dur2014improved}%
  \BibitemOpen
  \bibfield  {author} {\bibinfo {author} {\bibfnamefont {W.}~\bibnamefont
  {D{\"u}r}}, \bibinfo {author} {\bibfnamefont {M.}~\bibnamefont
  {Skotiniotis}}, \bibinfo {author} {\bibfnamefont {F.}~\bibnamefont
  {Froewis}}, \ and\ \bibinfo {author} {\bibfnamefont {B.}~\bibnamefont
  {Kraus}},\ }\href@noop {} {\bibfield  {journal} {\bibinfo  {journal}
  {Physical Review Letters}\ }\textbf {\bibinfo {volume} {112}},\ \bibinfo
  {pages} {080801} (\bibinfo {year} {2014})}\BibitemShut {NoStop}%
\bibitem [{\citenamefont {Arrad}\ \emph {et~al.}(2014)\citenamefont {Arrad},
  \citenamefont {Vinkler}, \citenamefont {Aharonov},\ and\ \citenamefont
  {Retzker}}]{arrad2014increasing}%
  \BibitemOpen
  \bibfield  {author} {\bibinfo {author} {\bibfnamefont {G.}~\bibnamefont
  {Arrad}}, \bibinfo {author} {\bibfnamefont {Y.}~\bibnamefont {Vinkler}},
  \bibinfo {author} {\bibfnamefont {D.}~\bibnamefont {Aharonov}}, \ and\
  \bibinfo {author} {\bibfnamefont {A.}~\bibnamefont {Retzker}},\ }\href@noop
  {} {\bibfield  {journal} {\bibinfo  {journal} {Physical Review Letters}\
  }\textbf {\bibinfo {volume} {112}},\ \bibinfo {pages} {150801} (\bibinfo
  {year} {2014})}\BibitemShut {NoStop}%
\bibitem [{\citenamefont {Zhou}\ \emph {et~al.}(2018)\citenamefont {Zhou},
  \citenamefont {Zhang}, \citenamefont {Preskill},\ and\ \citenamefont
  {Jiang}}]{zhou2018achieving}%
  \BibitemOpen
  \bibfield  {author} {\bibinfo {author} {\bibfnamefont {S.}~\bibnamefont
  {Zhou}}, \bibinfo {author} {\bibfnamefont {M.}~\bibnamefont {Zhang}},
  \bibinfo {author} {\bibfnamefont {J.}~\bibnamefont {Preskill}}, \ and\
  \bibinfo {author} {\bibfnamefont {L.}~\bibnamefont {Jiang}},\ }\href@noop {}
  {\bibfield  {journal} {\bibinfo  {journal} {Nature Communications}\ }\textbf
  {\bibinfo {volume} {9}},\ \bibinfo {pages} {78} (\bibinfo {year}
  {2018})}\BibitemShut {NoStop}%
\bibitem [{\citenamefont {Demkowicz-Dobrza\ifmmode~\acute{n}\else
  \'{n}\fi{}ski}\ \emph {et~al.}(2017)\citenamefont
  {Demkowicz-Dobrza\ifmmode~\acute{n}\else \'{n}\fi{}ski}, \citenamefont
  {Czajkowski},\ and\ \citenamefont {Sekatski}}]{demkowicz2017adaptive}%
  \BibitemOpen
  \bibfield  {author} {\bibinfo {author} {\bibfnamefont {R.}~\bibnamefont
  {Demkowicz-Dobrza\ifmmode~\acute{n}\else \'{n}\fi{}ski}}, \bibinfo {author}
  {\bibfnamefont {J.}~\bibnamefont {Czajkowski}}, \ and\ \bibinfo {author}
  {\bibfnamefont {P.}~\bibnamefont {Sekatski}},\ }\href@noop {} {\bibfield
  {journal} {\bibinfo  {journal} {Physical Review X}\ }\textbf {\bibinfo
  {volume} {7}},\ \bibinfo {pages} {041009} (\bibinfo {year}
  {2017})}\BibitemShut {NoStop}%
\bibitem [{\citenamefont {Reiter}\ \emph {et~al.}(2017)\citenamefont {Reiter},
  \citenamefont {S{\o}rensen}, \citenamefont {Zoller},\ and\ \citenamefont
  {Muschik}}]{reiter2017dissipative}%
  \BibitemOpen
  \bibfield  {author} {\bibinfo {author} {\bibfnamefont {F.}~\bibnamefont
  {Reiter}}, \bibinfo {author} {\bibfnamefont {A.~S.}\ \bibnamefont
  {S{\o}rensen}}, \bibinfo {author} {\bibfnamefont {P.}~\bibnamefont {Zoller}},
  \ and\ \bibinfo {author} {\bibfnamefont {C.}~\bibnamefont {Muschik}},\
  }\href@noop {} {\bibfield  {journal} {\bibinfo  {journal} {Nature
  Communications}\ }\textbf {\bibinfo {volume} {8}},\ \bibinfo {pages} {1822}
  (\bibinfo {year} {2017})}\BibitemShut {NoStop}%
\end{thebibliography}%

\onecolumngrid
\newpage

\appendix



\section{\label{app:ns}The necessary and sufficient condition~for QCRB-saturating POVM}

In this appendix, we prove \thmref{thm:ns} in the main text. The classical Fisher information $F(\rho_\theta)$ satisfies
\begin{equation}
\begin{split}
F(\rho_\theta) &= \sum_{  x:\trace(E_x\rho_\theta) \neq 0 } \frac{\big(\trace(E_{  x } \partial_\theta\rho_\theta)\big)^2}{\trace(E_{  x }\rho_\theta)}
= \sum_{  x:\trace(E_x\rho_\theta) \neq 0  } \frac{\big({\rm Re}[\trace(E_{  x } L_\theta \rho_\theta)]\big)^2}{\trace(E_{  x }\rho_\theta)}\\
&\leq \sum_{  x:\trace(E_x\rho_\theta) \neq 0  } \frac{\big(\abs{\trace(E_{  x } L_\theta \rho_\theta)}\big)^2}{\trace(E_{  x }\rho_\theta)} \leq \sum_{  x :\trace(E_x\rho_\theta) \neq 0 } \trace(E_{  x } L_\theta \rho_\theta  L_\theta  )
\\&\leq \trace(L_\theta^2 \rho_\theta) \equiv J(\rho_\theta),
\end{split}
\end{equation}
where the first equality holds true when
\begin{equation}
\label{eq:ns-1}
{\rm Im}[\trace(E_{  x } L_\theta \rho_\theta)] = 0,\;\text{for all }{  x },
\end{equation}
the second equality holds true when 
\begin{equation}
\label{eq:ns-2}
E_{  x }^{1/2} \rho_\theta^{1/2} = \lambda_{  x } E_{  x }^{1/2} L_\theta \rho_\theta^{1/2},\;\lambda_{  x } \in \mathbb{C},\;\text{for all }{  x },
\end{equation}
based on the use of the Cauchy-Schwarz inequality,
and the third equality holds true when 
\begin{equation}
\forall x \, \text{~s.t.~} \trace(E_x \rho_\theta) = 0,\,\,\quad \trace(E_x L_\theta \rho_\theta L_\theta) = 0 \Leftrightarrow E_x^{1/2} L_\theta \ket{\psi_{\theta,i}} = 0,~\forall i. 
\end{equation}

For pure states,
\begin{gather}
L_\theta = 2(\ket{\partial_\theta\psi_\theta}\bra{\psi_\theta}+\ket{\psi_\theta}\bra{\partial_\theta\psi_\theta}),
\\
J(\rho_\theta) = 4\abs{\braket{\partial_\theta\psi_\theta|\psi_\theta}}^2.
\end{gather}
and in general when $\rho_\theta = \sum_{k} p_{\theta,k} \ket{\psi_{\theta,k}}\bra{\psi_{\theta,k}}$, 
\begin{gather}
L_\theta = \sum_{\substack{j,k\\ p_{\theta,j}+p_{\theta,k} \neq 0}}\frac{2}{p_{\theta,j}+p_{\theta,k}} \bra{\psi_{\theta,j}} \partial_\theta \rho_\theta  \ket{\psi_{\theta,k}} \ket{\psi_{\theta,j}}\bra{\psi_{\theta,k}},\\
J(\rho_\theta) = \sum_{\substack{j,k\\ p_{\theta,j}+p_{\theta,k} \neq 0}}\frac{2}{p_{\theta,j}+p_{\theta,k}} \abs{\bra{\psi_{\theta,j}} \partial_\theta \rho_\theta  \ket{\psi_{\theta,k}}}^2 .
\end{gather}
Combining \eqref{eq:ns-1} and \eqref{eq:ns-2}, we get
\begin{equation}
\label{eq:ns-3}
E_{  x }^{1/2} \rho_\theta^{1/2} = \lambda_{  x } E_{  x }^{1/2} L_\theta \rho_\theta^{1/2},\;\lambda_{  x } \in \mathbb{R},\;\text{for all }{  x }.
\end{equation}
Therefore \eqref{eq:ns-3} is a necessary and sufficient condition for a POVM $\{E_{  x }\}$ to be QCRB-saturating. 

To eliminate $\lambda_{  x }$ in \eqref{eq:ns-3}, one may first rewrite it via vectorization:
\begin{equation}
\label{eq:ns-5}
(E^{1/2}_{  x } \otimes I) \dket{\rho_\theta^{1/2}} = \lambda_{  x } (E^{1/2}_{  x } \otimes I) \dket{ L_\theta \rho_\theta^{1/2}},
\end{equation}
where $\dket{A} = \sum_{ij} \braket{i|A|j} \ket{i}\ket{j}$. Note that 
\begin{equation}
\dket{v}= \lambda \dket{w},\;\lambda \in \mathbb R.\;
\Longleftrightarrow \;\dket{v}\dbra{w} - \dket{w}\dbra{v} = 0.
\end{equation}
It means that \eqref{eq:ns-5} is equivalent to 
\begin{equation}
(E^{1/2}_{  x } \otimes I) \left(\dket{\rho_\theta^{1/2}}\dbra{L_\theta\rho_\theta^{1/2}} - \dket{L_\theta\rho_\theta^{1/2}}\dbra{\rho_\theta^{1/2}}\right) (E^{1/2}_{  x } \otimes I) = 0.
\end{equation}
Assuming $\rho_\theta = \sum_{k} p_{\theta,k} \ket{\psi_{\theta,k}}\bra{\psi_{\theta,k}}$ ($p_{\theta,k} > 0$), 
\eqref{eq:ns-3} is simplified to 
\begin{equation}
\label{eq:ns-6}
E_x^{1/2} M_{ij} E_x^{1/2} = 0 ,\quad\forall i,j,{  x }.
\end{equation}
where $M_{ij} = \ket{\psi_{\theta,i}}\bra{\psi_{\theta,j}} L_\theta - L_\theta \ket{\psi_{\theta,i}}\bra{\psi_{\theta,j}}$. 
In particular, for rank-one projective measurements $\{E_{  x } = \ket{E_{  x }}\bra{E_{  x }}\}$, \eqref{eq:ns-3} becomes
\begin{equation}
\bra{E_x} M_{ij} \ket{E_x} = 0 ,\quad\forall i,j,{  x }.
\end{equation}

When $\rho_\theta = \ket{\psi_\theta}\bra{\psi_\theta}$ is pure and $p_0 = 1$, the necessary and sufficient conditon becomes
\begin{equation}
E_x^{1/2} M_{00} E_x^{1/2} = 0 ,\quad\forall {  x },
\end{equation}
where $M_{00} = \ket{\psi_\theta}\bra{\psi_\theta} L_\theta - L_\theta \ket{\psi_\theta}\bra{\psi_\theta}$.

When $\rho_\theta = p_\theta \ket{\psi_0}\bra{\psi_0} + (1 - p_\theta) \ket{\psi_1}\bra{\psi_1}$ where $\ket{\psi_{0,1}}$ is independent of $\theta$, the necessary and sufficient condition becomes
\begin{equation}
\label{eq:mixed}
E_x^{1/2} \ket{\psi_0}\bra{\psi_1} E_x^{1/2} = 0 ,\quad\forall {  x },
\end{equation}
because $M_{00} = M_{11} = 0$ and $M_{01} = - M_{10}^\dagger = -\frac{\partial_\theta p_\theta}{p_\theta(1-p_\theta)}\ket{\psi_0}\bra{\psi_1}$.

\section{\label{app:LOCC}QCRB-saturating LOCC}

We first prove \lemmaref{thm:zero-diag} which will become quite useful in constructing QCRB-saturating LOCC: 

\begin{proof}
We only need to prove any two traceless Hermitian matrices $M_1$ and $M_2$ can be simultaneously zero-diagonalized. We first consider the case where $d=2$, i.e. $M_1$ and $M_2$ are $2$-by-$2$ traceless Hermitian matrices. Let 
\begin{equation}
M_{k} = \begin{pmatrix}
a_k & b_k e^{i\phi_k} \\
b_k e^{-i\phi_k} & -a_k\\
\end{pmatrix},
\end{equation}
$k=1,2$, and 
\begin{equation}
U = \begin{pmatrix}
\cos\beta &  -\sin\beta e^{i\alpha}\\
\sin\beta e^{-i\alpha} & \cos\beta \\
\end{pmatrix}.
\end{equation}
Then $U^\dagger M_k U$ has zero diagonal elements is equivalent to
\begin{equation}
a_k (\cos^2\beta - \sin^2\beta) 
= - 2 b_k \cos\beta\sin\beta \cos(\alpha-\phi_k)
\end{equation}
\begin{equation}
\;\Longleftrightarrow\;
\cot 2\beta = - \frac{b_k}{a_k} \cos(\alpha-\phi_k),\quad k=1,2.
\end{equation}
It can be solved by first finding $\alpha$ satisfying
$
b_1a_2 \cos(\alpha-\phi_1) = b_2a_1 \cos(\alpha-\phi_2)
$
and then solving $\beta$ using the equation above. For higher dimension, \lemmaref{thm:zero-diag} can be proven by induction. Suppose \lemmaref{thm:zero-diag} holds for $d \leq \bar d$. Then when $d = \bar d+1$, we only need to find some $\ket{v}$ such that $\braket{v|M_1|v} = \braket{v|M_2|v} = 0$. The rest follows by the induction assumption by simultaneouly diagonalizing $M_1$ and $M_2$ in the $\bar d$ dimensional orthogonal subspace perpendicular to $\ket{v}$. Now we prove the existence of $\ket{v}$. Without loss of generality (WLOG), we assume $M_1\neq 0$ is diagonal, 
\begin{equation}
M_1 = \begin{pmatrix}
\Lambda_1 & 0 \\
0 & \Lambda_2 \\
\end{pmatrix},
\end{equation}
where we divide the Hilbert space into the direct sum of two subspaces and put $M_1$ in a block-diagonal form such that $\Lambda_1 \succ 0$ and $\Lambda_2 \prec 0$. Meanwhile, 
\begin{equation}
M_2 = \begin{pmatrix}
\Sigma_1 & B \\
B^\dagger & \Sigma_2 \\
\end{pmatrix}.
\end{equation}
We can always rescale $M_2$ such that one of the following situations occurs:
\begin{enumerate}[(a),wide,labelwidth=!,labelindent=3pt]
  \item
  $\trace(\Lambda_1)=\trace(\Sigma_1) > 0$ and $\trace(\Lambda_2)=\trace(\Sigma_2) < 0$. Then by the induction assumption, there are $\ket{v_1}$ and $\ket{v_2}$, s.t.
  \begin{align}
  \braket{v_1|\Lambda_1|v_1} &= \braket{v_1|\Sigma_1|v_1} > 0,\\
  \braket{v_2|\Lambda_2|v_2} &= \braket{v_2|\Sigma_2|v_2} < 0.
  \end{align}
  Let $\ket{v} = \cos\beta \ket{v_1 \oplus 0}+ \sin\beta e^{-i\alpha} \ket{0 \oplus v_2}$, we have
  \begin{equation}
  \braket{v|M_1|v} = \cos^2\beta \braket{v_1|\Lambda_1|v_1} 
  + \sin^2\beta \braket{v_2|\Lambda_2|v_2},
  \end{equation}
  \begin{equation}
  \braket{v|M_2|v} = \cos^2\beta \braket{v_1|\Sigma_1|v_1} 
  + \sin^2\beta \braket{v_2|\Sigma_2|v_2} + 2\cos\beta\sin\beta \mathrm{Re}[e^{-i\alpha} \bra{v_1} B \ket{v_2}].
  \end{equation}
  Clearly, there is a solution $(\alpha,\beta)$, s.t. $\braket{v|M_1|v} = \braket{v|M_2|v} = 0$.
  \item $\trace(\Sigma_1) = \trace(\Sigma_2) = 0$. Then by the induction assumption, there are $\ket{v_1}$ and $\ket{v_2}$, s.t.
  \begin{align}
  \braket{v_1|\Lambda_1|v_1} >0&,~~ \braket{v_1|\Sigma_1|v_1} = 0,\\
  \braket{v_2|\Lambda_2|v_2} <0&,~~ \braket{v_2|\Sigma_2|v_2} = 0.
  \end{align}
  Let $\ket{v} = \cos\beta \ket{v_1 \oplus 0}+ \sin\beta e^{-i\alpha} \ket{0 \oplus v_2}$, we have
  \begin{gather}
  \braket{v|M_1|v} = \cos^2\beta \braket{v_1|\Lambda_1|v_1} 
  + \sin^2\beta \braket{v_2|\Lambda_2|v_2},\\
  \braket{v|M_2|v} = 2\cos\beta\sin\beta \mathrm{Re}[e^{-i\alpha} \bra{v_1} B \ket{v_2}].
  \end{gather}
  Clearly, there is a solution $(\alpha,\beta)$, s.t. $\braket{v|M_1|v} = \braket{v|M_2|v} = 0$.
\end{enumerate}
\lemmaref{thm:zero-diag} is then proved. 
\end{proof}

To find the QCRB-saturating LOCC in \thmref{thm:locc}, we only need to find an orthonormal basis which has the structure $E_{x_1,\ldots,x_n} = E_{x_1}^{{\mathfrak s}_1} \otimes E_{x_1,x_2}^{{\mathfrak s}_2} \otimes \cdots \otimes E_{x_1,\ldots,x_n}^{{\mathfrak s}_n}$ and satisfy $\braket{E_{x_1,\ldots,x_n}|\tM|E_{x_1,\ldots,x_n}} = 0$ as well. It can be constructed by the following procedure:
\begin{enumerate}[(1),wide,labelwidth=!,labelindent=3pt]
\item Find an orthonormal basis $\{\ket{E^{{\mathfrak s}_1}_{x_1}}\}_{x_1=1}^{\dim {\mathfrak s}_1}$ which zero-diagonalizes $\tM^{{\mathfrak s}_1} = \trace_{{\mathfrak s}_2\cdots {\mathfrak s}_n}(\tM)$, i.e. $\braket{E^{{\mathfrak s}_1}_{x_1}|\tM^{{\mathfrak s}_1}|E^{{\mathfrak s}_1}_{x_1}} = 0$ for all $x_1$.  
\item Find an orthonormal basis $\{\ket{E^{{\mathfrak s}_2}_{x_1,x_2}}\}_{x_2=1}^{\dim {\mathfrak s}_2}$ which zero-diagonalizes $\tM^{{\mathfrak s}_2}_{x_1} = \trace_{{\mathfrak s}_3\cdots {\mathfrak s}_n}\braket{E^{{\mathfrak s}_1}_{x_1}|\tM|E^{{\mathfrak s}_1}_{x_1}}$.
\item Find an orthonormal basis $\{\ket{E^{{\mathfrak s}_k}_{x_1,\ldots,x_k}}\}_{x_k=1}^{\dim {\mathfrak s}_k}$ which zero-diagonalizes 
\begin{equation}
\tM^{{\mathfrak s}_k}_{x_1,\ldots,x_{k-1}} = \trace_{{\mathfrak s}_{k+1}\cdots {\mathfrak s}_n} \bra{E^{{\mathfrak s}_{k-1}}_{x_1,\ldots,x_{k-1}}}\cdots\bra{E^{{\mathfrak s}_1}_{x_1}} \tM \ket{E^{{\mathfrak s}_1}_{x_1}}\cdots\ket{E^{{\mathfrak s}_{k-1}}_{x_1,\ldots,x_{k-1}}}
\end{equation}
till $k=n$. 
\end{enumerate}
Then one can easily verify 
\begin{equation}
E_{x_1,\ldots,x_n} = E_{x_1}^{{\mathfrak s}_1} \otimes E_{x_1,x_2}^{{\mathfrak s}_2} \otimes \cdots \otimes E_{x_1,\ldots,x_n}^{{\mathfrak s}_n}
\end{equation}
is QCRB-saturating, where 
\begin{equation}
E_{x_1,\ldots,x_k}^{{\mathfrak s}_k} = \ket{E_{x_1,\ldots,x_k}^{{\mathfrak s}_k}}\bra{E_{x_1,\ldots,x_k}^{{\mathfrak s}_k}}.
\end{equation}
Note that the proof of \lemmaref{thm:zero-diag} is constructive. It means the QCRB-saturating LOCC can be calculated directly from matrix $\tM$.

\section{\label{app:dist}Distinguishing two orthogonal quantum states}

In Ref.~\cite{walgate2000local}, the distinguishability of two multipartite orthogonal states $\{\ket{\psi_{0,1}}\}$ via LOCC is shown by writing them as 
\begin{align}
\ket{\psi_0} &= \sum_{(x_1,\ldots,x_n) \in {\mathfrak s}_0} \alpha_{x_1,\ldots,x_n}\ket{x_1}\cdots \ket{x_n}_{x_1,\ldots,x_{n-1}},\\
\ket{\psi_1} &= \sum_{(x_1,\ldots,x_n) \in {\mathfrak s}_1} \alpha_{x_1,\ldots,x_n} \ket{x_1}\cdots \ket{x_n}_{x_1,\ldots,x_{n-1}},
\end{align}
where $x_k \in [\dim({\mathfrak s}_k)]$, ${\mathfrak s}_0 \cap {\mathfrak s}_1 = \emptyset$, $\alpha_{x_1,\ldots,x_n}$ are probability amplitudes and $\braket{x_k|x'_k}_{x_1,\ldots,x_{k-1}} = \delta_{x_k,x'_k}$.
As we can see, it is equivalent to the QCRB-saturating condition for rank-two mixed states with fixed eigenbasis $\ket{\psi_{0,1}}$,
\begin{equation}
\braket{E_x|\psi_0}\braket{\psi_1|E_x} = 0,\text{ for all }\ket{E_x}.
\end{equation}
The LOCC measurement basis $\ket{E_x}$ corresponds to $\ket{x_1}\ket{x_2}_{x_1}\cdots \ket{x_n}_{x_1,\ldots,x_{n-1}}$.

\section{An example of the LOCC protocol}
\label{app:example}

Here we demonstrate our LOCC protocol by considering an open boundary Hamiltonian in a four-qubit system
\begin{equation}
\label{eq:open}
H = \theta \sum_{i=1}^3 \sigma^{x,{\mathfrak s}_i}\sigma^{x,{\mathfrak s}_{i+1}},
\end{equation} 
where $\sigma^{x,i}$ is the Pauli-X matrix acting on the $i$-th qubit and a Dicke state input
\begin{equation}
\label{eq:dicke}
\ket{\psi_{\rm in}
}= \frac{1}{2}(\ket{1000}+\ket{0100}+\ket{0010}+\ket{0001}).
\end{equation} 
The parameter we want to estimate is the Hamiltonian strength $\theta$ which is encoded in $\ket{\psi_\theta}$ through $\ket{\psi_\theta} = e^{-iH}\ket{\psi_{\rm in}
}$ after unit time evolution. 
Numerical search suggests that there is no QCRB-saturating LM and LOCC is necessary. 
As shown in \figref{fig:locc}, we demonstrate our LOCC protocol by directly calculating a set of QCRB-saturating LOCCs for $\theta \in [0,\frac{\pi}{4}]$. 
The tree structure illustrates the choice of measurement basis for qubit ${\mathfrak s}_k$ dependent on the results from $\{{\mathfrak s}_1,\ldots, {\mathfrak s}_{k-1}\}$ via classical communitation. Note that the LOCC protocol illustrated in \figref{fig:locc} is not unique and there could be other LOCCs that are also QCRB-saturating due to remaining degrees of freedom.

\begin{figure*}[htbp]
\includegraphics[width=18cm]{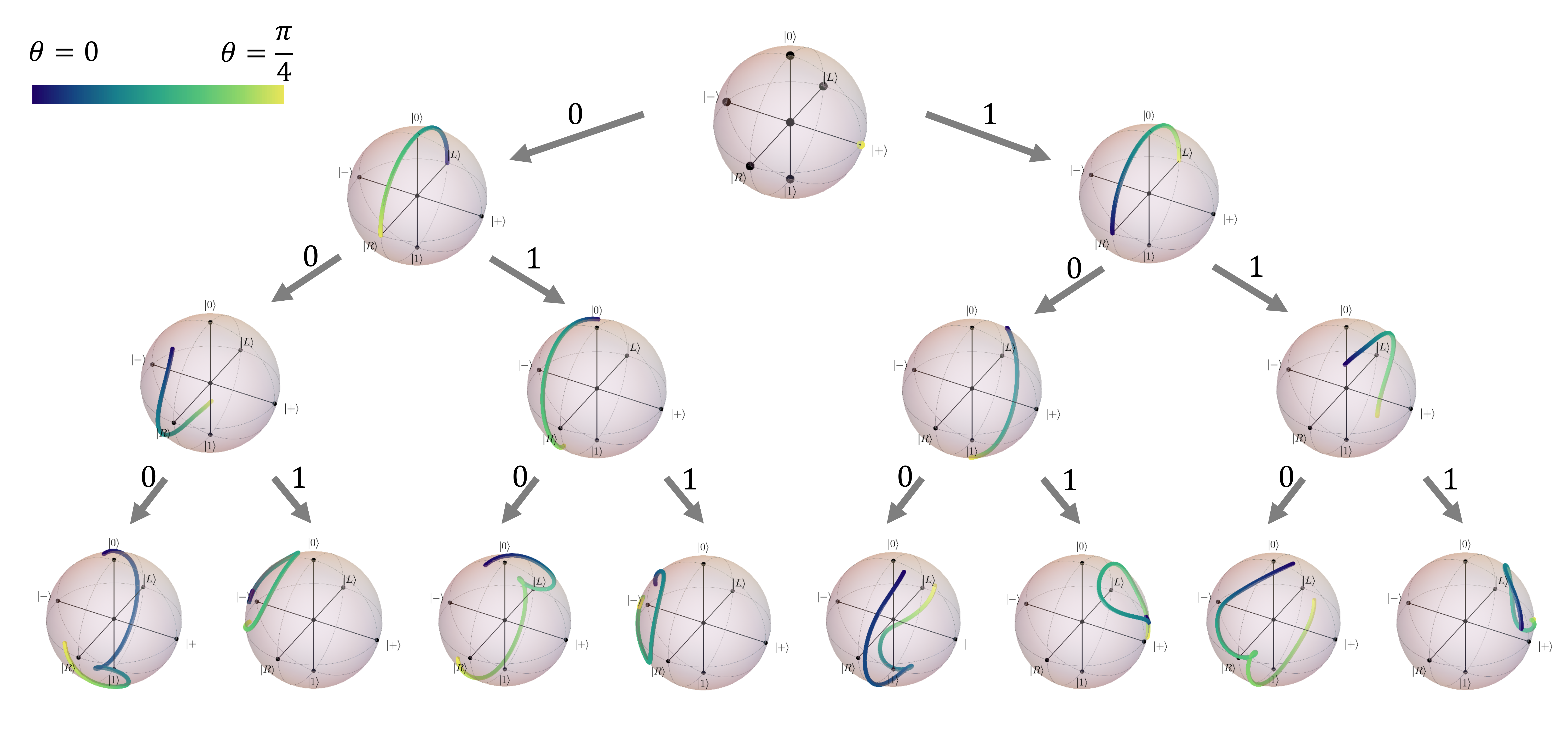}
\caption{\label{fig:locc} Plotting the LOCC measurement basis for each qubit on a Bloch sphere in a four-qubit system described by \eqref{eq:open} and \eqref{eq:dicke}. The eigenstates of Pauli matrices $\ket{0,1}$, $\ket{\pm}$ and $\ket{L,R} = \frac{\ket{0}\pm i\ket{1}}{\sqrt{2}}$ are labeled on the Bloch sphere. Columns from top to bottom each represent measurement on qubit ${\mathfrak s}_1,{\mathfrak s}_2,{\mathfrak s}_3,{\mathfrak s}_4$ and arrows represent how the measurement basis should be chosen based on previous measurement results $x=0,1$. The measurement basis is represented by a point on the surface of the Bloch sphere which corresponds to $x=0$. The color indicates the value of Hamiltonian strength $\theta \in [0,\frac{\pi}{4}]$. Note that the measurement on first qubit does not change with time because $M^{{\mathfrak s}_1} = \trace_{{\mathfrak s}_2 \ldots {\mathfrak s}_4}(M)$ and $\trace_{{\mathfrak s}_2 \ldots {\mathfrak s}_4}(\ket{\psi_\theta}\bra{\psi_\theta}-\frac{1}{16}I)$ are always proportional to $\sigma_z$. }
\end{figure*}

\section{\label{app:no-go-LM} LM is not sufficient to saturate the QCRB 
for multipartite systems
}

Here we want to show LM is not sufficient to saturate the QCRB for sufficiently large $n$ in general sensing scenarios, where $n$ is the number of subsystems. Consider Hilbert space $\mathcal H^{\otimes n}$ where $\dim \mathcal H = d$. For type (i) or (ii) states, let 
\begin{equation}
E_{x_1,\ldots,x_n} = E_{x_1}^{{\mathfrak s}_1} \otimes E_{x_2}^{{\mathfrak s}_2} \otimes \cdots \otimes E_{x_n}^{{\mathfrak s}_n},
\end{equation}
be a QCRB-saturating local measurement satisfying 
\begin{equation}
\label{eq:lm-no-go-1}
(E_{x_1}^{{\mathfrak s}_1} \otimes E_{x_2}^{{\mathfrak s}_2} \otimes \cdots \otimes E_{x_n}^{{\mathfrak s}_n})M(E_{x_1}^{{\mathfrak s}_1} \otimes E_{x_2}^{{\mathfrak s}_2} \otimes \cdots \otimes E_{x_n}^{{\mathfrak s}_n}) = 0,~~~\forall (x_1,\ldots,x_n)\in[r_1]\times \cdots\times [r_n]. 
\end{equation}
where $[r]= \{1,2,\ldots,r\}$. Let $\{\ket{E_{x_j,i_j}^{\fraks_j}}\}_{i_j}$ be a basis of the support of $E_{x_j}^{\fraks_j}$. Then \eqref{eq:lm-no-go-1} implies 
\begin{equation}
\label{eq:lm-no-go-2}
\trace\big(M(F_1^{\fraks_1}\otimes \cdots\otimes F_n^{\fraks_n})\big) = 0,
\end{equation}
for all Hermitian $F_i^{\fraks_i} \in {\rm span}\{\ket{E_{x_i,j_i}^{\fraks_i}}\bra{E_{x_i,j_i}^{\fraks_i}}, \forall x_i, j_i\}$.



We first consider type (i) states, then if we define $\ket{\psi_\theta^\perp} = (1-\ket{\psi_\theta}\bra{\psi_\theta})\ket{\partial_\theta \psi}$,
\begin{equation}
M = \ket{\psi_\theta}\bra{\psi_\theta^\perp} - \ket{\psi_\theta^\perp}\bra{\psi_\theta}.
\end{equation}
Suppose $\mathcal H^{\otimes n}= \mathcal H_1\otimes \mathcal H_2$ where $\mathcal H_1 = \mathcal H^{\otimes m}$ with $m \leq n/2$. Then the reduce matrix $M_r = \trace_{\mathcal H_2}(M)$ after tracing out $\mathcal H_2$ could be an arbitrary traceless anti-Hermitian matrix (up a real factor) by choosing
\begin{gather} 
\ket{\psi_\theta} = \frac{1}{\sqrt{d^{m}}}\sum_{i=1}^{d^{m}} \ket{i}_{\mathcal H_1}\ket{i}_{\mathcal H_2},\\
\ket{\psi_\theta^\perp} = \frac{1}{\sqrt{\trace(-M_r^2)}} \sum_{i,j=1}^{d^{m}} M_{r,ij}\ket{i}_{\mathcal H_1}\ket{j}_{\mathcal H_2}.
\end{gather}
According to \eqref{eq:lm-no-go-2}, $M_r$ has to satisfy
\begin{equation}
\label{eq:lm-no-go-3}
\trace\big(M_r(F_1^{\fraks_1}\otimes \cdots\otimes F_m^{\fraks_n})\big) = 0,
\end{equation}
for all Hermitian $F_i^{\fraks_i} \in {\rm span}\{\ket{E_{x_i,j_i}^{\fraks_i}}\bra{E_{x_i,j_i}^{\fraks_i}}, \forall x_i, j_i\}$. There must exists a (not necessarily orthonormal) basis $\{\ket{e_i^{\mathfrak s_k}}\}_{i=1}^d$ for each $k$ such that 
\begin{equation}
\label{eq:lm-no-go-4}
\bigg( \bigotimes_{k=1}^m \bra{e_{k_i}^{\mathfrak s_k}} \bigg) M_r \bigg( \bigotimes_{k=1}^m \ket{e_{i_k}^{\mathfrak s_k}} \bigg) = 0,\quad \forall i_k = 1,2,\ldots,d. 
\end{equation}

Note that the degree of freedom of an arbitrary traceless anti-Hermitian matrix $M_r$ is $d^{2m} - 1$. But the degree of freedom for matrices satisfying \eqref{eq:lm-no-go-4} is at most $d^{2m} - d^{m} + m d^4$ which is smaller than $d^{2m}$ for large enough $m$. Here $d^m$ is the minimum number of equations of constraints in \eqref{eq:lm-no-go-4} and $m d^4$ is the local freedoms in choosing the basis. Therefore \eqref{eq:lm-no-go-4} could not be satisfied for arbitrary $M_r$, implying that \eqref{eq:lm-no-go-2} could not be satisfied for all possible $M$. For type (ii) states, the same argument holds if we replace $\ket{\psi_\theta}$ and $\ket{\psi_\theta^\perp}$ above with $\ket{\psi_0}$ and $\ket{\psi_1}$.

\section{\texorpdfstring{\eqref{eq:lm-1}}{Eq.~(26)} cannot be distinguished using LM}
\label{app:lm-1}

Consider a two-qubit system. Here we show that 
\begin{equation}
\ket{\psi_\theta} = \frac{1}{\sqrt{2}}(\ket{00}+\ket{1+}),\quad \ket{\psi_\theta^\perp} = \frac{1}{\sqrt{2}}(\ket{01}+\ket{1-})
\end{equation}
cannot be distinguished using LM, i.e. there is no LM $\{E_{x_1}^{\mathfrak s_1}\otimes E_{x_2}^{\mathfrak s_2}\}$ such that 
\begin{equation}
\bra{\psi_\theta}E_{x_1}^{\mathfrak s_1}\otimes E_{x_2}^{\mathfrak s_2}\ket{\psi_\theta}\cdot\bra{\psi_\theta^\perp}E_{x_1}^{\mathfrak s_1}\otimes E_{x_2}^{\mathfrak s_2}\ket{\psi_\theta^\perp} = 0,\quad \forall x_1,x_2. 
\end{equation}
Choose $\ket{\phi_1}$ and $\ket{\phi_2}$ in the support of some $E_{x_1}^{\mathfrak s_1}$ and $E_{x_2}^{\mathfrak s_2}$ respectively. 
Then 
\begin{equation}
\braket{\psi_\theta|\phi_1\otimes\phi_2}\braket{\phi_1\otimes\phi_2|\psi_\theta^\perp} = 0.
\end{equation}
Since the identity operator $I$ is in the span of all $\ket{\phi_2}\bra{\phi_2}$, we must have 
\begin{equation}
\braket{\psi_\theta|\phi_1}\braket{\phi_1|\psi_\theta^\perp} = 0.
\end{equation}
Note that both $\braket{\phi_1|\psi_\theta}$ and $\braket{\phi_1|\psi_\theta^\perp}$ are not zero because $\ket{\psi_\theta}$ and $\ket{\psi_\theta^\perp}$ are entangled. Let $\ket{e^{(2)}_1} \propto \braket{\phi_1|\psi_\theta} $ and $\ket{e^{(2)}_2} \propto \braket{\phi_1|\psi_\theta^\perp}$ be unit vectors. We must have either $\ket{\phi_2} = \ket{e^{(2)}_1}$ or $\ket{\phi_2} = \ket{e^{(2)}_2}$ because $\braket{e^{(2)}_1|\phi_2}\braket{\phi_2|e^{(2)}_2} = 0$. 
Therefore, there are two orthonormal basis $\ket{e^{(1,2)}_{1,2}}$ such that 
\begin{equation}
\braket{e_i^{(1)}\otimes e_j^{(2)}|\psi_\theta}\braket{\psi_\theta^\perp| e_i^{(1)}\otimes e_j^{(2)}} = 0,\quad\forall i,j=1,2. 
\end{equation}
Since $\ket{\psi_\theta}$ and $\ket{\psi_\theta^\perp}$ are both entangled, we must have 
\begin{equation}
\ket{\psi_\theta} = \lambda_1 \ket{e_1^{(1)}}\ket{e_1^{(2)}} + \lambda_2 \ket{e_2^{(1)}}\ket{e_2^{(2)}}, 
\;
\ket{\psi_\theta} = \mu_1 \ket{e_1^{(1)}}\ket{e_2^{(2)}} + \mu_2 \ket{e_2^{(1)}}\ket{e_1^{(2)}}, 
\end{equation}
for some non-zero $\lambda_{1,2}$ and $\mu_{1,2}$. Clearly, \eqref{eq:lm-1} does not have this form. Therefore, they cannot be distinguished using LM. 

On the other hand, there exists an LM such that the QCRB is saturated. For example, 
\begin{gather}
\ket{E_1^{\mathfrak s_1}} = \ket{0},\quad 
\ket{E_2^{\mathfrak s_1}} = \ket{1},\\ 
\ket{E_1^{\mathfrak s_2}} = \cos\frac{\pi}{8}\ket{0} + \sin\frac{\pi}{8}\ket{1},\quad 
\ket{E_2^{\mathfrak s_2}} = -\sin\frac{\pi}{8}\ket{0} + \cos\frac{\pi}{8}\ket{1}.
\end{gather}


\section{There is no QCRB-saturating projective LM for \texorpdfstring{\eqref{eq:3-3}}{Eq.~(27)}}
\label{app:lm-2}

Here we prove there is no QCRB-saturating projective LM when 
\begin{equation}
A = \begin{pmatrix}
{\sqrt{2}}/{2} & 0 & 0 \\
0 & {1}/{2} & 0\\
0 & 0 & 1/2\\
\end{pmatrix},\;
B = \begin{pmatrix}
{\sqrt{2}i}/{2} & 0 & 0 \\
0 & {-i}/{2} & 0\\
0 & 0 & {-i}/{2}\\
\end{pmatrix} 
\end{equation} 
but there is a QCRB-saturating general LM. From the proof of \lemmaref{thm:bi}, we can see that it is equivalent to the statement that there are no unitaries $U$ and $V$ such that \eqref{eq:bipartite-1} and \eqref{eq:bipartite-2} are both satisfied, but there are isometries $U$ and $V$ such that these conditions are satisfied. 

Our goal is to prove there are no unitaries $U$ and $V$ such that \eqref{eq:bipartite-1} is satisfied. Now suppose $U$ and $V$ are unitary. According to the definitions of $C$ and $D$, we have 
\begin{gather}
C_{ij} = (U^\dagger AV)_{ij} = \frac{1}{2}\big( \sqrt{2} \overline{U}_{1i} V_{1j} + \overline{U}_{2i} V_{2j} + \overline{U}_{3i} V_{3j} \big),\\
D_{ij} = (U^\dagger BV)_{ij} = \frac{1}{2}\big( \sqrt{2}i \overline{U}_{1i} V_{1j} - i \overline{U}_{2i} V_{2j} - i \overline{U}_{3i} V_{3j} \big),
\end{gather}
and
\begin{equation}
\begin{split}
C_{ij} \overline{D}_{ij} - D_{ij} \overline{C}_{ij} 
&= \frac{1}{4}(\sqrt{2} \overline{U}_{1i} V_{1j} + \overline{U}_{2i} V_{2j} + U_{i3}v_{3j})(-\sqrt{2}i U_{1i}\overline{V}_{1j} + i U_{2i}\overline{V}_{2j} + i U_{3i}\overline{V}_{3j})
\\& \qquad - (\sqrt{2} U_{1i}\overline{V}_{1j} + U_{2i}\overline{V}_{2j} + U_{3i}\overline{V}_{3j})(\sqrt{2}i \overline{U}_{1i}V_{1j} - i \overline{U}_{2i}V_{2j} - i \overline{U}_{3i}V_{3j})\\
&= - i \abs{\overline{U}_{1i}V_{1j}}^2 + \frac{i}{2} \abs{\overline{U}_{2i}V_{2j} + \overline{U}_{3i}V_{3j}}^2 = 0,\quad \forall i,j.
\end{split}
\end{equation}
First note that the following two transformations do not change \eqref{eq:bipartite-1}:
\begin{enumerate}[(1),wide,labelwidth=!,labelindent=0pt]
\item 
$
U \rightarrow 
\begin{pmatrix}
1 & 0 \\
0 & R \\
\end{pmatrix}
U,\;
V \rightarrow 
\begin{pmatrix}
1 & 0 \\
0 & R \\
\end{pmatrix}
V, 
$
where $R$ is any unitary matrix in $\bC^{2\times 2}$. 
\item 
$
U \rightarrow U D_1,\;
V \rightarrow V D_2, 
$
where $D_1$ and $D_2$ are arbitrary diagonal unitary matrices. 
\end{enumerate}
Therefore, WLOG, we assume $(U_{11},U_{21},U_{31}) = (\mathbb{R}_+,\mathbb{R}_+,0)$.
\begin{gather}
2 \abs{\overline{U}_{11} V_{1j}}^2 = \abs{\overline{U}_{21} V_{2j}}^2 ,\quad \forall j,\\
\Rightarrow~~ 2 \abs{U_{11}}^2 \sum_{j=1}^3 \abs{V_{1j}}^2
= \abs{U_{21}}^2 \sum_{j=1}^3 \abs{V_{2j}}^2,\\
\Rightarrow~~ U_{11} = 1/\sqrt{3},\; U_{21} = \sqrt{2}/\sqrt{3}, \text{~~and~~}
\abs{V_{1j}}^2 = \abs{V_{2j}}^2 =: s_j,\quad \forall j.
\end{gather}
WLOG, assume $V_{2j} = \sqrt{s_j}$, $\forall j$. 

Consider the following two situations:

\noindent Situation (1): $s_j = 0$ for some $j$. Then $\abs{\overline{U}_{3i} V_{3j}}^2 = 0$ and $U_{3i} = 0$ for all $i$, which is not possible.

\noindent Situation (2): $s_j > 0$, $\forall j$. Then for $i\neq 1$, 
\begin{gather}
2  \abs{\overline{U}_{1i} V_{1j}}^2 = \abs{\overline{U}_{2i} V_{2j} + \overline{U}_{3i} V_{3j}}^2 ,\forall j,\\
\Rightarrow~~ 2 \abs{U_{1i}}^2 = \abs{U_{2i}}^2 + \abs{U_{3i}}^2 \frac{1}{s_j}\abs{V_{3j}}^2 +  \frac{2}{\sqrt{s_j}} \cdot \Re[U_{2i} \overline{U}_{3i} V_{3j}], \label{eq:contradict}\\
\Rightarrow~~ 2 \sum_{i>1}\abs{U_{1i}}^2 = \sum_{i>1}\abs{U_{2i}}^2 + \sum_{i>1}\abs{U_{3i}}^2 \frac{1}{s_j}\abs{V_{3j}}^2,\\
\Rightarrow~~ 
\abs{V_{1j}}^2 = \abs{V_{2j}}^2 = \abs{V_{3j}}^2 = 1/3. 
\end{gather}
According to \eqref{eq:contradict}, for some $i\neq 1$ such that $U_{2i} \overline{U}_{3i}$,
\begin{equation}
\Re[U_{2i} \overline{U}_{3i} V_{31}] = \Re[U_{2i} \overline{U}_{3i} V_{32}] =
\Re[U_{2i} \overline{U}_{3i} V_{33}]. 
\end{equation}
We must have $V_{3i} \in \{e^{i\theta_1}/\sqrt{3},e^{i\theta_2}/\sqrt{3}\}$ for all $i$. This is not possible because in that case
\begin{equation}
\sum_{i=1}^3 \overline{V}_{3i} V_{2i} \neq 0,
\end{equation}
contradicting with the requirement that $V$ is unitary. 

On the other hand, let 
\begin{equation}
U = 
\begin{pmatrix}
1/\sqrt{3} & 1/\sqrt{3} & 1/\sqrt{3}  \\
\sqrt{2}/\sqrt{3} & -1/\sqrt{6} & -1/\sqrt{6} \\
0 & -i/\sqrt{2} & i/\sqrt{2} \\
\end{pmatrix},
\quad 
V = 
\begin{pmatrix}
e^{i\pi/4}/2 & e^{i3\pi/4}/2 & -e^{i\pi/4}/2 & -e^{i3\pi/4}/2\\
1/2 & 1/2 & 1/2 & 1/2 \\
1/2 & -1/2 & 1/2 & -1/2\\
\end{pmatrix},
\end{equation}
one can verify in this case both \eqref{eq:bipartite-1} and \eqref{eq:bipartite-2} are satisfied.

\end{document}